\newtheorem{theorem}{Theorem}
\newtheorem{proposition}[theorem]{Proposition}
\newtheorem{lemma}[theorem]{Lemma}
\newtheorem{corollary}[theorem]{Corollary}
\newtheorem{definition}{Definition}
\newtheorem{example}{Example}
\newtheorem{conjecture}{Conjecture}
\newtheorem{remark}{Remark}
\renewcommand{\epsilon}{\varepsilon}
\renewcommand{\epsilon}{\varepsilon}
\begin{document}

\sloppy


\begin{frontmatter}

\title{Primitive Sets of Words}

\author[label1]{Giuseppa Castiglione\corref{cor1}}
\ead{giuseppa.castiglione@unipa.it}

\author[label1]{Gabriele Fici}
\ead{gabriele.fici@unipa.it}

\author[label1]{Antonio Restivo}
\ead{antonio.restivo@unipa.it}

\address[label1]{Dipartimento di Matematica e Informatica, Universit\`a di Palermo, Palermo, Italy}

\cortext[cor1]{Corresponding author.}

\journal{Theoretical Computer Science}

\begin{abstract}
Given a (finite or infinite) subset $X$ of the free monoid $A^*$ over a finite alphabet $A$, the rank of $X$ is the minimal cardinality of a set $F$ such that $X \subseteq F^*$.
We say that a submonoid $M$ generated by $k$ elements of $A^*$ is {\em $k$-maximal} if there does not exist another submonoid generated by at most $k$ words containing $M$. We call a  set $X \subseteq A^*$ {\em primitive} if it is the basis of a $|X|$-maximal submonoid. This definition encompasses the notion of primitive word --- in fact, $\{w\}$ is a primitive set if and only if $w$ is a primitive word. 
By definition, for any set $X$, there exists a primitive set $Y$ such that $X \subseteq Y^*$. We therefore call $Y$ a {\em primitive root} of $X$. As a main result, we prove that if a set has rank $2$, then it has a unique primitive root. 
To obtain this result, we prove that the intersection of two $2$-maximal submonoids is either the empty word or a submonoid generated by one single primitive word.

For a single word $w$,  we say that the set $\{x,y\}$ is a {\em bi-root} of $w$ if $w$ can be written as a concatenation of copies of $x$ and $y$ and $\{x,y\}$ is a primitive set. We prove that every primitive word $w$ has at most one bi-root $\{x,y\}$ such that $|x|+|y|<\sqrt{|w|}$. That is, the bi-root of a word is unique provided the  word is sufficiently long with respect to the size (sum of lengths) of the root.

Our results are also compared to previous approaches that investigate pseudo-repetitions, where a morphic involutive function $\theta$ is defined on $A^*$. In this setting, the notions of $\theta$-power, $\theta$-primitive and $\theta$-root are defined, and it is shown that any word has a unique $\theta$-primitive root. This result can be obtained with our approach by showing that a word $w$ is $\theta$-primitive if and only if $\{w, \theta(w)\}$ is a primitive set.
\end{abstract}

\begin{keyword}
Primitive set, $k$-maximal monoid, bi-root, pseudo-repetition, hidden repetition.
\end{keyword}

\end{frontmatter}

\section{Introduction}

The notion of {\em rank} plays an important role in combinatorics on words. Given a subset $X$ of the free monoid $A^*$ over a finite alphabet $A$, the rank of $X$, in symbols $r(X)$, is defined as the smallest number of words needed to express all words of $X$, i.e., as the minimal cardinality of a set $F$ such that $X \subseteq F^*$. Notice that this minimal set $F$ may not be unique. For instance, the set $X = \{aabca,aa,bcaaa\}$ has rank $2$ and there exist two distinct sets $F_1 = \{aa,bca\}$ and $F_2 = \{a,bc\}$ such that $X \subseteq F_1^*$ and $X \subseteq F_2^*$. It is worth noticing that $r(X) \leq \min\{|X|,|A|\}$, hence $r(X)$ is always finite even if $X$ is an infinite set.
A set $X$ is said to be {\em elementary} if $r(X) = |X|$. The notion of rank -- and the related notion of elementary set -- have been investigated in several papers (cf.~\cite{Neraud90,Neraud92,Neraud93}). In particular, in~\cite{Neraud90} it is shown that the problem to decide whether a finite set is elementary is co-NP-complete.

In this paper, we introduce the notion of primitiveness for a {\em set} of words, which is closely related to that of rank. We first define the notion of {\em $k$-maximal} submomoid. A submonoid $M$ of $A^*$, generated by $k$ elements, is $k$-maximal if there does not exist another submonoid generated by at most $k$ words containing $M$. We then call a set $X \subseteq A^*$ {\em primitive} if it is the basis of a $|X|$-maximal submonoid. Notice that if $X$ is primitive, then $r(X) = |X|$, i.e., $X$ is elementary. The converse is not in general true: there exist elementary sets that are not primitive. For instance, the set $F_{1}=\{aa,bca\}$ is elementary, but it is not primitive since $F_{1}^{*} \subseteq F_{2}^*=\{a,bc\}^*$. The set $F_{2}$, instead, is primitive. 

The notion of primitive set can be seen as an extension of the classical notion of primitive word. Indeed, given a word $w \in A^*$, the set $\{w\}$ is primitive if and only if the word $w$ is primitive. For instance, the set $\{abab,abababab\}$ is not elementary; the set $\{abab\}$ is elementary but not primitive; the set $\{ab\}$ is primitive.

We have from that definition that for every set $X$, there exists a primitive set $Y$ such that $X \subseteq Y^*$. The set $Y$ is therefore called a \emph{primitive root} of $X$. However, the primitive root of a set is not, in general, unique. Consider for instance the set $X = \{abcbab,abcdcbab,abcdcdcbab\}$. It has rank $3$, hence it is elementary, yet it is not primitive. Indeed, $X \subseteq \{ab,cb,cd\}^*$. The set $\{ab,cb,cd\}$ is primitive, and it is a primitive root of $X$. However, it is not the only primitive root of $X$: the set $\{abc,dc,bab\}$ is primitive and $X \subseteq \{abc,dc,bab\}^*$, hence $\{abc,dc,bab\}$ is another primitive root of $X$. In the special case of sets of rank $1$, clearly these always have a unique primitive root. For instance, the primitive root of the set $\{abab,abababab\}$ is the set $\{ab\}$.

As a main result, we prove that if a set has rank $2$, then it has a unique primitive root. This is equivalent to say that for every pair of nonempty words $\{x,y\}$ such that $xy\neq yx$ there exists a unique primitive set $\{u,v\}$ such that $x$ and $y$ can be written as concatenations of copies of $u$ and $v$. The proof is based on the algebraic properties of $k$-maximal submonoids of a free monoid. 

In this investigation, we also take into account another notion of rank, that of {\em free rank} (in the literature, in order to avoid ambiguity, the notion of rank we gave above is often referred to as the {\em combinatorial rank}). The free rank of a set $X$ is the cardinality of the basis of the minimal {\em free} submonoid containing $X$. Closely related to the notion of free rank is the {\em defect theorem}, which states that if $X$ is not a {\em code} (i.e., $X^*$ is not a free submonoid), then the free rank of $X$ is strictly smaller than its cardinality. We are specially interested in the case $k=2$ (that is, the case of $2$-maximal submonoids) and we use the fact that, in this special case, the notions of free rank and (combinatorial) rank coincide. A fundamental step in our argument is Theorem \ref{th_case_2}, which states that the intersection of two $2$-maximal submonoids is either the empty word or a submonoid generated by one primitive word. As a consequence, for every submonoid $M$ generated by two words that do not commute (i.e., two words $\{x,y\}$ such that $xy\neq yx$), there exists a unique $2$-maximal submonoid containing $M$. This is equivalent to the fact that every set of rank $2$ has a unique primitive root. One of the examples we gave above shows that this result is no longer true, in general, for sets of rank $3$ or more --- this highlights the very special role of sets of rank $1$ or $2$.

From these results we derive some consequences on the combinatorics of a {\em single} word. Given a word $w$,  we say that $\{x,y\}$ is a \emph{bi-root} of $w$ if $w$ can be written as a concatenation of copies of $x$ and $y$ and $\{x,y\}$ is a primitive set. We prove that every primitive word $w$ has at most one bi-root $\{x,y\}$ such that $|x|+|y|<\sqrt{|w|}$. That is, the bi-root of a word is unique provided the length of the word is sufficiently large with respect to the size of the root. The notion of bi-root of a single word may be seen as a way to capture a hidden ``repetitive structure'', which encompasses the classical notion of integer repetition (non-primitive word). Indeed, the existence in a word $w$ of a ``short'' (with respect to $|w|$) bi-root reveals some hidden repetition in the word.

As described in the last section, our results can  also be compared to previous approaches that investigate {\em pseudo-repetitions}, where an involutive  morphism (or antimorphism) $\theta$ is defined on the set of words $A^*$. This idea stems from  the seminal paper of Czeizler, Kari and Seki~\cite{CKS}, where originally $\theta$ was the Watson-Crick complementarity function and the motivation was the discovery of hidden repetitive structures in biological sequences.   A word $w$ is called a $\theta$-power if there exists a word $v$ such that $w$ can be factored using copies of $v$ and $\theta(v)$ --- otherwise the word $w$ is called $\theta$-primitive; if $v$ is a $\theta$-primitive word, then it is called the $\theta$-primitive root of $w$. Of course, since the same applies to the word $\theta(w)$, these definitions can be given in terms of the pair $\{w,\theta(w)\}$ and considering as the root the pair $\{v,\theta(v)\}$. With our results, we generalize this setting by considering as a root any pair of words $\{x,y\}$, i.e., dropping the relation between the components of the pair.

Some of the results contained in this paper were presented by the third author in an invited talk at WORDS 2019~\cite{words}. However, this paper significantly differs from the version published in the proceedings of the conference, both in the exposition and in the results presented.

\section{Preliminaries}\label{sec:prem}

Given a finite nonempty set $A$, called the \emph{alphabet}, with $A^*$ (resp.~$A^+=A^*\setminus\{\epsilon\}$) we denote the {\em free monoid} (resp.~{\em free semigroup}) generated by $A$ (under concatenation), i.e., the set of all finite words (resp.~all finite nonempty words) over $A$.

The length $|w|$ of a word $w\in A^*$ is the number of its symbols. The length of the empty word $\epsilon$ is $0$. For a word $w=uvz$, with $u,v,z\in A^*$, we say that $v$ is a \emph{factor} of $w$. Such a factor is called \emph{internal} if $u,z\neq \epsilon$, a \emph{prefix} if $u=\epsilon$, or a  \emph{suffix} if $z=\epsilon$. A word $w$ is \emph{primitive} if $w=v^n$ implies $n=1$, otherwise it is called \emph{a power}. Equivalently, $w$ is primitive if and only if it is not an internal factor of $w^2$.

It is well known in combinatorics on words (see, e.g., \cite{Lothaire1}) that given two words $x$ and $y$ we have $xy=yx$ if and only if $x$ and $y$ are powers of the same word. In this case we say that words  $x$ and $y$  \emph{commute}. As a consequence, a primitive word cannot be written as the concatenation of two words that commute.

Given a subset $X$ of $A^*$, we let $X^*$ denote the submonoid of $A^*$ generated by $X$. Conversely, given a submonoid $M$ of $A^*$, there exists a unique set $X$ that generates $M$ and is minimal for set inclusion.
In fact, $X$ is the set
\begin{equation}\label{base}
    X=(M \setminus \{\epsilon\}) \setminus (M \setminus \{\epsilon\})^2,
\end{equation}
i.e., $X$ is the set of nonempty words of $M$ that cannot be written as a product of two nonempty words of $M$. The set $X$ will be referred to as the {\em minimal generating set} of $M$, or the set of {\em generators} of $M$.

Let $M$ be a submonoid of $A^*$ and $X$ its minimal generating set.
$M$ is said to be {\em free} if any word of $M$ can be {\em uniquely} expressed as a product of elements of $X$. The minimal generating set of a free submonoid $M$ of $A^*$ is called a {\em code}; it is referred to as \emph{the basis} of $M$.
It is easy to see that a set $X$ is a code if and only if, for every $x,y \in X$, $x\neq y$, one has $xX^* \cap yX^* = \emptyset$.
We say that $X$ is a {\em prefix code} (resp.~a {\em suffix code}) if for all $x,y \in X$, one has $x \cap yA^* = \emptyset$ (resp.~$x \cap A^*y = \emptyset$). A code is a {\em bifix code} if it is both a prefix and a suffix code. 
It follows from elementary automata theory that if $X$ is a prefix code, then there exists a  DFA $\mathcal{A}_X$ recognizing $X^*$ whose set of states $Q_X$ verifies (cf.~\cite{BPR}):  $$\vert Q_X \vert \leq \sum_{x \in X} \vert x \vert - \vert X \vert +1.$$

A submonoid $M$ of $A^*$ is called {\em pure} (cf. \cite{Restivo74}) if for all $w\in A^*$ and $n \geq 1$,
$$w^n \in M \Rightarrow w \in M.$$

A set $X \subseteq A^*$ is said to be a {\em circular code} (cf.~\cite{BPR}) if for every $u,v \in A^*$, $uv,vu \in X^*$ implies $u,v \in X^*$.

\noindent Remark that the submonoids of the form $X^*$, with $X$ a circular code, give rise to a special subclass of pure submonoids and are also called {\em very pure} submonoids (cf. \cite{Restivo74}). 

By a result of Tilson~\cite{Tilson}, any nonempty intersection of free submonoids of $A^*$ is free. As a consequence, for any subset $X \subseteq A^*$, there exists the smallest free submonoid containing $X$.

Here we mention the well-known Defect Theorem (cf.~\cite{BPPR79}, ~\cite[Chap.~1]{Lothaire1}, \cite[Chap.~6]{Lothaire2}), a fundamental result in the theory of codes that provides a relation between a given subset $X$ of $A^*$ and the basis of the minimal free submonoid containing $X$ (called the {\em free hull} of $X$).

\begin{theorem}[Defect Theorem]\label{thm:defect}
Let $X$ be a finite nonempty subset of $A^*$. Let $Y$ be the basis of the free hull of $X$.
Then either $X$ is a code, and $Y=X$, or
$$|Y| \leq |X| -1.$$
\end{theorem}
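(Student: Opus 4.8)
The plan is to split on whether $X$ is a code. If $X$ is a code, then $X^*$ is itself free, and any free submonoid containing $X$ must contain the submonoid $X^*$ it generates; hence $X^*$ is the smallest free submonoid containing $X$, i.e. the free hull of $X$, its basis is $X$, and $Y=X$. The entire content therefore lies in the case where $X$ is \emph{not} a code, for which I aim to establish the strict bound $|Y|\le |X|-1$.

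For the non-code case, recall that $Y$, being the basis of a free submonoid, is a code, and $X\subseteq Y^*$, so every $x\in X$ has a \emph{unique} factorization $x=y_1y_2\cdots y_k$ with $y_i\in Y$. I would define the \emph{first-factor map} $\alpha\colon X\to Y$ by sending $x$ to $y_1$. The goal then reduces to two claims: (i) $\alpha$ is surjective, and (ii) $\alpha$ is not injective. Granting both, $\alpha$ is a surjection of the finite set $X$ onto $Y$ that is not a bijection, which forces $|Y|=|\alpha(X)|<|X|$, i.e. $|Y|\le |X|-1$, as required.

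Claim (ii) is the easy half and is where the hypothesis that $X$ is not a code enters. Choose a word $w$ of minimal length admitting two distinct factorizations over $X$, say $w=x_1\cdots x_m=x_1'\cdots x_n'$; minimality guarantees $x_1\ne x_1'$ (otherwise cancel the common first factor to obtain a shorter double factorization), and without loss of generality $x_1$ is a proper prefix of $x_1'$. Factor $w$ uniquely over $Y$ as $w=z_1\cdots z_k$. Since each $x_i$ and each $x_j'$ lies in $Y^*$, the $Y$-factorization of $w$ refines both $X$-factorizations; in particular the $Y$-factorizations of $x_1$ and of $x_1'$ are both initial segments of $z_1\cdots z_k$ beginning at $z_1$. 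Hence $\alpha(x_1)=z_1=\alpha(x_1')$ while $x_1\ne x_1'$, so $\alpha$ is not injective.

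The main obstacle is Claim (i), the surjectivity of $\alpha$, which is exactly the assertion that every element of $Y$ occurs as the initial $Y$-factor of some word of $X$. Here I would argue from the \emph{minimality} of the free hull: if some $y_0\in Y$ were never an initial factor, then every word of $X$ could be re-expressed over a code $Z$ obtained from $Y$ by absorbing each occurrence of $y_0$ into a neighbouring factor, producing a free submonoid $Z^*$ with $X\subseteq Z^*\subsetneq Y^*$ and thereby contradicting the minimality of $Y^*$ as the smallest free submonoid containing $X$. Making the construction of $Z$ watertight --- in particular handling boundary occurrences of $y_0$ and verifying that $Z$ is again a code whose generated submonoid is strictly smaller than $Y^*$ --- is the delicate point; it rests on the structural properties of bases of free submonoids and ultimately on Tilson's theorem, which guarantees that the free hull is well defined.
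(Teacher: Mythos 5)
The paper does not actually prove this statement: the Defect Theorem is quoted as a classical result, with pointers to the literature (\cite{BPPR79}, \cite[Chap.~1]{Lothaire1}, \cite[Chap.~6]{Lothaire2}), so there is no internal proof to compare you against. Your argument is, in substance, the standard proof from those references --- the first-factor map $\alpha\colon X\to Y$ onto the basis of the free hull, surjective by minimality of the hull and non-injective because $X$ is not a code --- and it is correct. The step you flag as delicate does close in the standard way: if $y_0\in Y$ never occurs as the first $Y$-factor of a word of $X$, take $Z=(Y\setminus\{y_0\})\,y_0^{*}$. Every occurrence of $y_0$ in the $Y$-factorization of an $x\in X$ is preceded, inside $x$ itself, by a factor from $Y\setminus\{y_0\}$ (there are no boundary occurrences precisely because no $x$ begins with $y_0$), so $X\subseteq Z^{*}$; $Z$ is a code, since the $Z$-factorization of any word of $Z^{*}$ is recovered from its unique $Y$-factorization by cutting just before each factor in $Y\setminus\{y_0\}$; and $y_0\in Y^{*}\setminus Z^{*}$ gives $Z^{*}\subsetneq Y^{*}$, contradicting minimality of the free hull. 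Note that $Z$ is in general infinite, which is harmless: minimality concerns free submonoids, not finitely generated ones, so only freeness of $Z^{*}$ is needed.

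Two small points of hygiene. First, the paper states $X\subseteq A^{*}$, so you should dispose of $\epsilon\in X$ separately: your map $\alpha$ is undefined on $\epsilon$, and a set containing $\epsilon$ is never a code; but $X\setminus\{\epsilon\}$ has the same free hull, so applying your argument to it yields $|Y|\leq |X\setminus\{\epsilon\}|\leq |X|-1$ in all subcases. Second, in claim (ii) the reduction ``without loss of generality $x_1$ is a proper prefix of $x_1'$'' is never used; all you need is that the $Y$-factorizations of the nonempty words $x_1,x_1'\in Y^{*}$ are initial segments of the unique $Y$-factorization $z_1\cdots z_k$ of $w$ (which follows from freeness of $Y^{*}$), whence $\alpha(x_1)=z_1=\alpha(x_1')$ with $x_1\neq x_1'$ guaranteed by the minimality of $w$.
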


As in \cite{HarjuK04}, given a set $X \subseteq A^*$, we let $r_f(X)$ denote the cardinality of the basis of the free hull of $X$, called the {\em free rank} of $X$. Notice that for any subset $X \subseteq A^*$, $X$ and $X^*$ have the same free rank.  Furthermore, by $r(X)$ we denote the {\em combinatorial rank} (or simply rank) of $X$, defined by:
$$r(X)=\min\{|Y| \mid Y \subseteq A^*, X \subseteq Y^*\}.$$
With this notation, the Defect Theorem can be stated as follows.

\begin{theorem}
Let $X$ be a finite nonempty subset of $A^*$. Then $r_f(X) \leq |X|$, and the equality holds if and only if $X$ is a code.
\end{theorem}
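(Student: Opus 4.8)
The plan is to obtain this statement as a direct reformulation of the Defect Theorem (Theorem~\ref{thm:defect}). The key observation I would start from is that, by definition, $r_f(X) = |Y|$, where $Y$ is the basis of the free hull of $X$, i.e.\ the minimal generating set of the smallest free submonoid containing $X$. Theorem~\ref{thm:defect} then provides the following dichotomy: either $X$ is a code, in which case $Y = X$ and therefore $r_f(X) = |X|$; or $X$ is not a code, in which case $|Y| \leq |X| - 1$ and therefore $r_f(X) \leq |X| - 1 < |X|$.

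The inequality $r_f(X) \leq |X|$ is then immediate, since both branches of the dichotomy yield $r_f(X) \leq |X|$. For the characterization of the equality case, I would exploit that the two branches are mutually exclusive and jointly exhaustive. If $X$ is a code, then $X^*$ is free and is the smallest free submonoid containing $X$ (indeed, any free submonoid containing $X$ must contain $X^*$, which is itself free), so the basis of the free hull is exactly $X$ and hence $r_f(X) = |X|$. Conversely, if $X$ is not a code, the second branch gives the strict inequality $r_f(X) < |X|$; contrapositively, $r_f(X) = |X|$ forces $X$ to be a code. This establishes the equivalence.

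Since everything reduces to the already-established Defect Theorem, I do not expect a substantive obstacle here; the statement is genuinely an equivalent repackaging of Theorem~\ref{thm:defect} phrased in terms of the free-rank notation $r_f$. The only step deserving an explicit line of justification is the claim that, when $X$ is a code, its free hull is $X^*$ itself, so that $r_f(X) = |X|$; this follows from the minimality of $X^*$ among the free submonoids containing $X$, together with the fact that $X$ is the basis of $X^*$ whenever $X$ is a code.
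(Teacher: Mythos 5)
Your proof is correct and takes essentially the same route as the paper, which offers no separate argument for this statement but presents it precisely as a notational restatement of the Defect Theorem (Theorem~\ref{thm:defect}), i.e.\ exactly the two-branch reduction you carry out. Your one explicitly justified step---that a code $X$ is the basis of its own free hull, so $r_f(X)=|X|$---is in fact already contained in the first branch of Theorem~\ref{thm:defect} (``$X$ is a code, and $Y=X$''), so nothing further is needed.
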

Note that, for any $X \subseteq A^+$, one has
$$r(X) \leq r_f(X) \leq |X|.$$

\begin{example}
Let $X= \{aa,ba,baa\}$. One can prove that $X$ is a code, hence we have $r_f(X)=3$, while $r(X)=2$ since $X\subset \{a,b\}^*$. For $X= \{aa, aaa\}$, we have $r(X)=r_f(X)=1.$
\end{example}

\begin{remark}\label{rank2}
If $\vert X\vert=2$ then $r_f(X)=r(X)$. So for sets of cardinality $2$ we will not specify if we refer to the free rank or to the (combinatorial) rank.

Moreover, from the complexity point of view, N\'eraud proved that deciding if a set has rank $2$ can be done in polynomial time \cite{Neraud93}, whereas for general rank $k$ it is an NP-hard problem~\cite{Neraud90}.
\end{remark}

The {\em dependency graph} (cf.~\cite{HarjuK04}) of a finite set $X\subset A^+$ is the graph $G_X=(X,E_X)$ where $E_X=\{(u,v) \in X \times X \mid uX^* \cap vX^* \neq \emptyset\}$.
Notice that if $X$ is a code, then $G_X$ has no edge. Furthermore, if $(u,v)$ is an edge, then $u$ is a prefix of $v$ or vice versa.
In \cite{Harju1986} and \cite{HarjuK04}, the following useful lemma is proved. 

\begin{lemma}[Graph Lemma] Let $X\subseteq A^+$ be a finite set that is not a code. Then
$$r_f(X)\leq c(X) < |X|,$$
where $c(X)$ is the number of  connected components of $G_X$.
\end{lemma}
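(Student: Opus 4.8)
The plan is to exploit unique factorization over the free hull by constructing a map from $X$ to the basis $Y$ of its free hull that is constant on the connected components of $G_X$ and surjective onto $Y$; this immediately gives $r_f(X) = |Y| \le c(X)$. The strict inequality $c(X) < |X|$ is the easy half: since $X$ is not a code, the code characterization recalled above fails, so there are two \emph{distinct} words $u,v \in X$ with $uX^* \cap vX^* \neq \emptyset$, that is, an edge of $G_X$ joining distinct vertices. This edge forces $u$ and $v$ into a common component, whence $c(X) \le |X|-1$.

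For the main inequality, let $Y$ be the basis of the free hull of $X$, so that $Y$ is a code, $X \subseteq Y^*$, and $r_f(X) = |Y|$. Since $Y$ is a code, every word of $Y^*$, in particular every $x \in X$, has a unique factorization as a product of elements of $Y$; I define $\phi(x) \in Y$ to be the first factor of this factorization. First I would show that $\phi$ is constant on each connected component of $G_X$, for which it suffices to check $\phi(u)=\phi(v)$ whenever $(u,v) \in E_X$. Picking $w \in uX^* \cap vX^*$ and writing $w = us = vt$ with $s,t \in X^* \subseteq Y^*$, the fact that $Y$ is a code means the $Y$-factorization of the product $w$ is the concatenation of the factorizations of its factors; hence the first factor of $w$ equals both $\phi(u)$ (reading $w=u\,s$) and $\phi(v)$ (reading $w=v\,t$), so $\phi(u)=\phi(v)$. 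Consequently $\phi$ factors through the set of connected components, and $|\phi(X)| \le c(X)$.

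It remains to prove that $\phi$ is surjective onto $Y$, which I expect to be the main obstacle and the step requiring the most care. I would argue by contradiction from the minimality of the free hull. Observe that $\phi(x)$ is also the first $Y$-factor of any word of $xX^*$, so if some $y_0 \in Y$ were never a first factor of an element of $X$, then no nonempty word of $X^*$ would begin, in its $Y$-factorization, with $y_0$. Thus in the $Y$-factorization of any $x \in X$ every maximal run of consecutive $y_0$'s is preceded by a factor different from $y_0$, and grouping each non-$y_0$ factor together with the run of $y_0$'s that follows it shows $X \subseteq Z^*$ for $Z = \{\, z\,y_0^{\,j} : z \in Y \setminus \{y_0\},\ j \ge 0 \,\}$. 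One checks that $Z$ is a code (the block decomposition is forced by the unique $Y$-factorization) and that $Z^* \subsetneq Y^*$, since $y_0 \notin Z^*$. Hence $Z^*$ would be a free submonoid containing $X$ and strictly smaller than the free hull, contradicting its minimality. Therefore $\phi$ is surjective, $|Y| = |\phi(X)| \le c(X)$, and combining the two parts yields $r_f(X) \le c(X) < |X|$.
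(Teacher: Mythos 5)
Your proof is correct; note that the paper itself states the Graph Lemma without proof, citing Harju and Karhum\"aki, so the right comparison is with those sources --- and your argument coincides with the standard one there: map each $x\in X$ to the first factor of its unique factorization over the basis $Y$ of the free hull, observe that this map is constant on connected components of $G_X$ because $Y$ is a code, and obtain surjectivity onto $Y$ via the replacement code $Z=(Y\setminus\{y_0\})\,y_0^*$, which is exactly the classical ``first-letter'' property of the free hull used in the textbook proof of the defect theorem. All the steps you flag as needing care do go through as you indicate ($Z$ is a code since block boundaries in a $Y$-factorization sit precisely at the non-$y_0$ factors, and $y_0\notin Z^*$ gives the strict inclusion $Z^*\subsetneq Y^*$ contradicting minimality of the hull), so there is no gap.
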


\begin{example}
Let $X= \{a,ab, abc, bca, acb, cba\}.$ We have $acba= a \cdot cba= acb \cdot a$ and $abca= a \cdot bca= abc \cdot a$. The basis of the free hull of $X$ is $Y=\{a, ab, bc, cb\}$, hence $r_f(X)=4$. Furthermore, $r(X)=3$ and $c(X)= 4$, as shown in Figure~\ref{fig:1}.
\end{example}

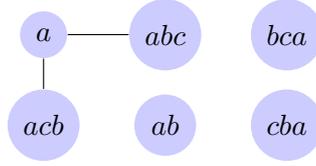
\begin{figure}
\begin{center}
\begin{tikzpicture}
  [scale=.8,auto=left,every node/.style={circle,fill=blue!20}]
  \node (na) at (4,9)  {$a$};
  \node (nabc) at (6,9)  {$abc$};
  \node (nbca) at (8,9)  {$bca$};

  \node (nacb) at (4,7.5) {$acb$};
  \node (nab) at (6,7.5)  {$ab$};
  \node (ncba) at (8,7.5)  {$cba$};

  \foreach \from/\to in {na/nabc}
        \draw (\from) -- (\to);
     \foreach \from/\to in {na/nacb}
    \draw (\from) -- (\to);
\end{tikzpicture}
\end{center}
\caption{\label{fig:1}The dependency graph of $X= \{a, ab, abc, bca, acb, cba\}$.}
\end{figure}

\section{$k$-Maximal Monoids}\label{sec:prim}

With $\mathcal{M}_k$ we denote the family of submonoids of $A^*$ having at most $k$ generators in $A^+$.
The following definition is fundamental for the theory developed in this paper.

\begin{definition}
A submonoid $M \in \mathcal{M}_k$ is $k$-maximal if for every $M^\prime \in \mathcal{M}_k$,  $M \subseteq M^\prime$ implies $M=M^\prime$.
\end{definition}
In other words, $M$ is $k$-maximal if it is not possible to find another submonoid generated by at most $k$ words containing $M$.

\begin{example}
Let $A=\{a,b,c\}$. The submonoid $M=\{a,abca\}^*$ is not $2$-maximal since $abca$ can be factored with $a$ and $bc$, hence $M$ is contained in $\{a,bc\}^*$. On the contrary, $\{a,bc\}^*$ is $2$-maximal since $a$ and $bc$ cannot be factored using two common factors.
\end{example}

\begin{example}\label{3-max}
Let $A=\{a,b,c,d\}$. The submonoid $\{a,cbd,dbd\}^*$ is $3$-maximal, whereas the submonoid $\{a,cbd, dcb\}^*$ is not $3$-maximal since it is contained in $\{a,cb,d\}^*.$ 
\end {example}

\begin{proposition}\label{prop:bifix}
Let $M$ be a $k$-maximal submonoid and $X$ its minimal generating set. Then, $X$ is a bifix code. 
\end{proposition}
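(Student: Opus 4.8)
The plan is to establish the two halves separately, namely that $X$ is a prefix code and, by a symmetric argument, that it is a suffix code. I focus first on the prefix case and argue by contradiction. Suppose $X$ is not a prefix code. Then there are two distinct generators $x,y\in X$ one of which is a proper prefix of the other; without loss of generality write $y=xz$ for some nonempty word $z$. The goal is to exhibit a submonoid $M'\in\mathcal{M}_k$ that strictly contains $M$, which contradicts the $k$-maximality of $M$.

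The natural candidate is obtained by replacing the ``redundant'' generator $y$ by the shorter word $z$: I would set $X'=(X\setminus\{y\})\cup\{z\}$ and $M'=(X')^*$. Since we delete one word and insert one word, $|X'|\le|X|\le k$, so $M'\in\mathcal{M}_k$. Moreover $x\in X\setminus\{y\}\subseteq X'$ and $z\in X'$, so $y=xz\in M'$; hence every element of $X$ lies in $M'$, which gives $M=X^*\subseteq M'$.

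The crucial step — and the one where the defining property of the minimal generating set is used — is to show that this containment is strict, for which it suffices to verify that $z\in M'\setminus M$. Here I would invoke the characterization \eqref{base}: if $z$ belonged to $M$, then, since $x$ and $z$ are both nonempty elements of $M$, the word $y=xz$ would lie in $(M\setminus\{\epsilon\})^2$, contradicting $y\in X$. Thus $z\notin M$ whereas $z\in M'$, so $M\subsetneq M'$, contradicting the $k$-maximality of $M$. This forces $X$ to be a prefix code.

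The suffix case is entirely analogous: if some $x\in X$ were a proper suffix of some $y=zx\in X$, the same replacement $X'=(X\setminus\{y\})\cup\{z\}$ yields $M\subsetneq M'$ by the identical reasoning, now factoring $y=zx$ and again concluding $z\notin M$ from \eqref{base}. Combining the two cases, $X$ is both a prefix and a suffix code, i.e.\ a bifix code. I do not expect a genuine obstacle here; the only delicate point is the strictness of the inclusion $M\subseteq M'$, and that is precisely where one must use the minimality of the generating set (rather than an arbitrary generating set) to guarantee that the removed factor $z$ was not already in $M$.
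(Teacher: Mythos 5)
Your proof is correct and takes essentially the same approach as the paper: replace the longer generator $y=xz$ (resp.\ $y=zx$) by the leftover word $z$, obtaining a submonoid in $\mathcal{M}_k$ containing $M$, which contradicts $k$-maximality. The only difference is that you explicitly verify strictness of the inclusion via the characterization of the minimal generating set (i.e.\ $z\notin M$, else $y\in(M\setminus\{\epsilon\})^2$), a point the paper's one-line proof leaves implicit.
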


\begin{proof}
By contradiction, if $X$ is not prefix (resp. not suffix) then there exist $u,v \in X$ and $t\in A^+$ such that $v=ut$ (resp $v=tu$). It follows that $X^* \subseteq (X  \setminus \{v\} \cup \{t\})^*$, whence $X^*=M$ is not $k$-maximal.
\end{proof}

\begin{remark}
By Proposition \ref{prop:bifix}, it follows that if $X^*$ is $k$-maximal, then $r(X)=r_f(X)=k.$ The inverse implication does not hold in general. For example, the submonoid $\{a,cbd,dcb\}^*$ of Example~\ref{3-max} has both rank and free rank  equal to $3$ and is bifix, but it is not $3$-maximal.
\end{remark}

\begin{proposition}\label{prop:pure}
Let $M$ be a $k$-maximal submonoid. Then $M$ is a pure submonoid.
\end{proposition}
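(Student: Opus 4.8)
The plan is to argue by contradiction and to reduce everything to a single application of the Defect Theorem (Theorem~\ref{thm:defect}). Let $X$ be the minimal generating set of $M$, so that $X$ is finite and $|X| \le k$, since $M \in \mathcal{M}_k$. Suppose $M$ were not pure; then there would exist a word $w \in A^* \setminus M$ and an integer $n \ge 2$ with $w^n \in M$. Here $w$ is necessarily nonempty (because $\epsilon \in M$), and we may assume $n \ge 2$ (since $n = 1$ would already give $w \in M$). Because $w \notin M \supseteq X$, we have in particular $w \notin X$, so the set $S = X \cup \{w\}$ is finite, nonempty, and satisfies $|S| = |X| + 1 \le k+1$.

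The crucial step is to show that $S$ is \emph{not} a code. For this I would exhibit two distinct factorizations of the word $w^n$ over $S$: on one hand $w^n = w \cdots w$ ($n$ copies of the generator $w$), and on the other hand, since $w^n \in M = X^*$, a factorization $w^n = x_1 \cdots x_m$ with each $x_i \in X \subseteq S$. These two factorizations are genuinely different because the first uses the generator $w$, which does not belong to $X$, whereas the second uses only elements of $X$; hence $w^n$ has at least two factorizations into elements of $S$, so $S^*$ is not free and $S$ is not a code.

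I would then apply the Defect Theorem to $S$. Since $S$ is finite, nonempty and not a code, the basis $Y$ of its free hull satisfies $|Y| \le |S| - 1 \le k$, and therefore $Y^* \in \mathcal{M}_k$. Moreover $Y^*$ is a submonoid containing $S$, hence it contains $S^* \supseteq M \cup \{w\}$; as $w \notin M$, this yields the strict inclusion $M \subsetneq Y^*$ with $Y^* \in \mathcal{M}_k$. This contradicts the $k$-maximality of $M$, and consequently $M$ must be pure.

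I expect the only delicate point to be the verification that $S$ fails to be a code, i.e.\ that the two displayed factorizations of $w^n$ are truly distinct; this is exactly where the hypotheses $w \notin M$ and $n \ge 2$ enter. The remainder is bookkeeping: the cardinality bound $|Y| \le k$ is immediate from the Defect Theorem, and the strict inclusion $M \subsetneq Y^*$ follows at once from $w \in Y^* \setminus M$. Note that primitivity of $w$ is not needed anywhere, so no preliminary reduction to a primitive root is required.
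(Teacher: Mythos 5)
Your proof is correct and follows essentially the same route as the paper: both adjoin $w$ to the minimal generating set $X$, observe that $X \cup \{w\}$ is not a code because $w^n$ has two distinct factorizations, and then apply the Defect Theorem together with $k$-maximality to force $w \in X^*$. The only differences are cosmetic: you phrase it as a contradiction where the paper argues directly, and you spell out the non-code verification that the paper leaves terse.
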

\begin{proof}
 We have to show that, for every $z\in A^*$, if $z^n\in M$, for some $n\geq 1$, then $z\in M$. Let $X$ be the minimal generating set of $M$. If $z^n\in M$, for some $n>1$, then $z\in X$ or the set $X \cup \{z\}$ is not a code. By the Defect Theorem (Theorem~\ref{thm:defect}), there exist $u_1,u_2,\ldots,u_k\in A^+$ such that $(X \cup \{z\})^*\subseteq \{u_1, u_2,...,u_k\}^*$. Since $X^*\subseteq \{u_1, u_2,\ldots,u_k\}^*$ and $X^*$ is $k$-maximal, we have that $X=\{u_1, u_2,\ldots,u_k\}$. Therefore,  $X \cup \{z\} \subseteq X^*$, hence $z\in  X^*$.
\end{proof}

As a direct consequence of Proposition \ref{prop:pure}, we have that a $k$-maximal submonoid is generated by primitive words. However,  not any set of $k$ primitive words generates a $k$-maximal monoid (e.g., $X=\{ab,ba\}^*$ is not $2$-maximal since it is contained in $\{a,b\}^*$).

Submonoids generated by two words, i.e., the elements of $\mathcal{M}_2$, are of special interest for our purposes.
They have been extensively studied in the literature (cf.~\cite{LenSchut,Kar84,Neraud93,Lerest}) and play an important role in some fundamental aspects of combinatorics on words. 

The reader may observe that, as a consequence of some well-known results in combinatorics on words, the submonoids in $\mathcal{M}_1$ have the following important property: If $x^*$ and $u^*$ are two distinct $1$-maximal submonoids (i.e., $x$ and $u$ are primitive words) then $x^* \cap u^* = \{\epsilon\}.$
Next Theorem~\ref{th_case_2}, which represents the main result of this section, can be seen as a generalization of this result to the case of $2$-maximal submonoids.

It is known (see \cite{Kar84}) that if $X$ and $U$ both have rank $2$, then the intersection $X^*\cap U^*$ is a free monoid generated either by at most two words, or by an infinite set of words.  

\begin{example}\label{no2max}
Let $X_1=\{abca,bc\}$ and $U_1=\{a, bcabc\}$. One can verify that $X_1^* \cap U_1^*=\{abcabc, bcabca\}^*.$ Let $X_2=\{aab, aba\}$ and $U_2=\{a, baaba\}$. Then $X_2^* \cap U_2^*=(a(abaaba)^*baaba)^*.$
\end{example}
In the previous example, we have two submonoids that are not $2$-maximal. Indeed, $X_1^*, U_1^* \subseteq \{a, bc\}^*$ and $X_2^*, U_2^* \subseteq \{a, b\}^*.$ We now address the question of finding the generators of the intersection of two $2$-maximal submonoids.

\begin{theorem}\label{th_case_2}
Let $X^*=\{x,y\}^*$ and $U^*=\{u,v\}^*$ be two distinct $2$-maximal submonoids. If $X^*\cap U^*\neq \{\epsilon\}$, then there exists a word $z\in A^+$ such that $X^*\cap U^*=z^*$. Moreover, $z$ is primitive, that is, $X^*\cap U^*$ is $1$-maximal.
\end{theorem}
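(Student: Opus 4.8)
The plan is to exploit the structural results already established about $2$-maximal submonoids, namely that their generating sets are bifix codes (Proposition~\ref{prop:bifix}) and that the submonoids are pure (Proposition~\ref{prop:pure}), together with the cited result of Karhum\"aki that $X^* \cap U^*$ is a free monoid generated by at most two words or by an infinite set. My first move is to set $M = X^* \cap U^*$ and observe that, by Tilson's theorem quoted in the preliminaries, $M$ is free; let $Z$ be its basis. The goal is to show $|Z| = 1$ and that the single generator is primitive.

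First I would dispose of the infinite case and the two-generator case by a rank argument. I claim $M$ must have rank at most $2$ because $M \subseteq X^*$ with $r(X) = 2$. Combined with Karhum\"aki's classification, this forces $Z$ to be finite with $|Z| \in \{1,2\}$. The main work is then to rule out $|Z| = 2$. Here is where I expect the central obstacle to lie. Suppose for contradiction $Z = \{z_1, z_2\}$ generates $M = X^* \cap U^*$. The idea is to show that such a configuration would force one of $X^*$ or $U^*$ to fail to be $2$-maximal, contradicting the hypothesis. Concretely, I would argue that since $Z^* \subseteq X^*$ and $Z^* \subseteq U^*$ with $r(Z) \le 2$, and since $X^*, U^*$ are distinct $2$-maximal monoids, a two-generator intersection would itself be a $2$-maximal monoid contained in both, which by maximality would force $X^* = Z^* = U^*$, contradicting distinctness. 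The delicate part is establishing that a rank-$2$ free intersection is actually $2$-maximal: one needs that no larger two-generated monoid sits above it, and the natural candidates for such a monoid are exactly $X^*$ and $U^*$ themselves, so the containment $Z^* \subsetneq X^*$ would have to be proper while still $Z^*$ being $2$-maximal — an impossibility.

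Once $|Z| = 1$ is forced, say $M = z^*$, the primitivity of $z$ follows immediately from purity: by Proposition~\ref{prop:pure} both $X^*$ and $U^*$ are pure, and the intersection of pure submonoids is pure, so $M = z^*$ is pure; a pure monoid $z^*$ cannot have $z = w^n$ with $n > 1$, since then $w^n \in z^* \subseteq M$ would force $w \in M = z^*$, making $z$ a proper power of a word in $\langle z \rangle$, which is absurd. Hence $z$ is primitive, and $z^*$ being $1$-maximal is exactly the statement that $z$ is primitive (distinct $1$-maximal submonoids intersect trivially, as noted in the remark preceding the theorem).

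The step I expect to be genuinely hard is ruling out the two-generator case cleanly: the subtlety is that $Z^* \subseteq X^*$ need \emph{not} imply $Z^* = X^*$ in general (the inclusion can be proper for elementary but non-maximal bases), so I cannot simply invoke maximality against $Z$. Instead I anticipate needing a finer argument — likely showing that any two-element basis $Z$ of $X^* \cap U^*$ would itself generate a $2$-maximal monoid (using that $Z$ is a bifix code by Proposition~\ref{prop:bifix} and that it cannot be enlarged without leaving both $X^*$ and $U^*$), and then deriving $X^* = U^*$ from the maximality of $X^*$ applied to the chain $Z^* \subseteq X^*$. Making this rigorous, and in particular pinning down why the intersection of two distinct maximal monoids cannot itself attain rank $2$, is the crux of the proof.
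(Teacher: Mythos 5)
There is a genuine gap, and it sits exactly at your two load-bearing claims. First, your disposal of the infinitely generated case does not work: the \emph{combinatorial} rank of $M = X^*\cap U^*$ being at most $2$ (true, since $M \subseteq X^*$) says nothing against $M$ having an infinite basis, because an infinitely generated free submonoid can perfectly well have combinatorial rank $2$ --- the paper's own Example~\ref{no2max} exhibits $X_2^*\cap U_2^* = (a(abaaba)^*baaba)^*$, an infinitely generated free monoid sitting inside $\{a,b\}^*$. Karhum\"aki's trichotomy is about the cardinality of the basis (the free rank), not the combinatorial rank, so the infinite branch cannot be killed by your inclusion argument; excluding it genuinely requires the $2$-maximality hypothesis. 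Second, your route for ruling out a two-element basis rests on the unproved assertion that such a basis $Z$ would generate a $2$-maximal monoid. Nothing forces this, and your fallback --- that ``the natural candidates'' for a two-generated monoid above $Z^*$ are $X^*$ and $U^*$ themselves --- is false: a monoid $W^*\supseteq Z^*$ with $|W|=2$ need not be comparable with either $X^*$ or $U^*$. You candidly flag this as the crux, but flagging it does not close it; as written, the proposal establishes only the easy final step.

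For comparison, the paper closes both gaps simultaneously with a single device you never invoke: the dependency graph and the Graph Lemma. Setting $Z = X\cup U$ (in the case $X\cap U=\emptyset$; the case of a shared generator is handled by a separate short graph argument), the Defect Theorem pins $r_f(Z)=3$. If the minimal generating set of $X^*\cap U^*$ contained two distinct words $z\neq z'$, one first shows via the Graph Lemma that their $X$- and $U$-factorizations must begin with the same generators $x$ and $u$; then, at the first index where the two factorizations diverge, one extracts the overflow word $t$ satisfying $x_1\cdots x_h t = u_1\cdots u_k$, whose continuations force the edges $(t,x)$, $(t,y)$ and $(x,u)$ in the dependency graph of $Z_t = X\cup U\cup\{t\}$. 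Hence $c(Z_t)\le 2$, and the Graph Lemma yields $r_f(Z_t)\le 2$, i.e., a two-element set whose star contains both $X^*$ and $U^*$, contradicting $2$-maximality. Note that this argument makes no case distinction between a two-element and an infinite basis: \emph{any} second generator is impossible, which is precisely why the infinite branch of Karhum\"aki's classification never needs separate treatment. Your purity argument for the primitivity of $z$ is correct and coincides with the paper's (the intersection of pure submonoids is pure), but the heart of the theorem --- uniqueness of the generator --- is missing from your proposal.
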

\begin{proof}
  If $X \cap U =\{z\}$ then $X^*\cap U^*=z^*$. Indeed, if $y=v=z$ and $X^*\cap U^*\neq z^*$ we have the following graph $G_Z$ for $Z=\{x,u,z\}$:
  \begin{center}
\begin{tikzpicture}
  [scale=.8,auto=left,every node/.style={circle,fill=blue!20}]
  \node (nx) at (4,9)  {$x$};
  \node (ny) at (6,8)  {$z$};
  \node (nu) at (4,7.5) {$u$};
  \foreach \from/\to in {nx/nu}
    \draw (\from) -- (\to);
\end{tikzpicture}
\end{center}
 since $\{x,z\}$ and $\{u,z\}$ are bifix sets. Hence, by the Graph Lemma, $r_f(Z) \leq c(Z)=2$, contradicting the $2$-maximality of $X$ and $U$.

  If $X \cap U = \emptyset$, let us consider the set $Z=X \cup U$. We have that $r_f(Z) >2$ since $X^*$ and $U^*$ are $2$-maximal, and, by the Defect Theorem (Theorem~\ref{thm:defect}), $r_f(Z) < 4$ since $Z^*$ is not free (as $X^*\cap U^*$ contains a nonempty word). Hence, the free rank of $Z$ is equal to $3$.

  Let $z$ be an element of the minimal generating set of $X^*\cap U^*$. So, $z=x_1x_2\cdots x_m=u_1u_2\cdots u_n$, with $m,n\geq 1$, $x_i \in X$ and $u_j \in U$, and for every $p<m$ and $q<n$ one has $x_1x_2\cdots x_p\neq u_1u_2\cdots u_q$. Moreover, we can suppose, without loss of generality, that $x_1 =x$ and $u_1 = u$. We want to prove that $z$ is the unique generator of $X^* \cap U^*$. By contradiction, suppose that there exists another  $z'\neq z$ in the minimal generating set of $X^*\cap U^*$, and let $z'=x'_1x'_2\cdots x'_r=u'_1u'_2.\cdots u'_s$.
  If $x'_1 \neq x_1 = x$, then $x'_1 = y$ and we have $xZ^* \cap uZ^* \neq \emptyset$  and $yZ^* \cap u'_1Z^* \neq \emptyset$. In both cases ($u'_1 = u$ or $u'_1 = v$), we have that the graph $G_Z$ has two edges, i.e., $c(Z) = 2$, which is impossible by the Graph Lemma. So, $x_1 = x'_1 = x$. In the same way one can prove that $u_1 = u'_1 = u$ and therefore in the graph $G_Z$ there is only one edge, namely the one joining $x$ and $u$.

\begin{center}
\begin{tikzpicture}
  [scale=.8,auto=left,every node/.style={circle,fill=blue!20}]
  \node (nx) at (4,9)  {$x$};
  \node (ny) at (6,9)  {$y$};
  \node (nu) at (4,7.5) {$u$};
  \node (nv) at (6,7.5)  {$v$};
  \foreach \from/\to in {nx/nu}
    \draw (\from) -- (\to);
\end{tikzpicture}
\end{center}

Let $h=\max\{i\mid x_j=x'_j\,\,  \forall j \leq i\}$ and $k=\max\{i\mid u_j=u'_j \,\, \forall j \leq i\}$. The hypothesis that $z\neq z'$ implies that $h<m$ and $k<n$. We show that this leads to a contradiction, we then conclude that $z=z'$ is the unique generator of $X^*\cap U^*$.

Without loss of generality, we can suppose that $x_1x_2\cdots x_h$ is a prefix of $u_1u_2\cdots u_k$. Hence, there exists a nonempty word $t$ such that $x_1x_2\cdots x_ht=u_1u_2\cdots u_k$. By definition of $h$, $x_{h+1}\neq x'_{h+1}$, and we can suppose that $x_{h+1}=x$ and $x'_{h+1}=y$. Then,
\begin{equation*}
\begin{split}
t u_{k+1}\cdots u_n & = x_{h+1}\cdots x_m= x\cdots x_m\\
t u'_{k+1}\cdots u'_s& = x'_{h+1}\cdots x'_r= y\cdots x'_r.
\end{split}
\end{equation*}

Set $Z_t=X\cup U\cup \{t\}$. We have
\begin{equation*}
\begin{split}
t Z_t^*\cap xZ_t^* & \neq \emptyset\\
t Z_t^*\cap yZ_t^* & \neq \emptyset.
\end{split}
\end{equation*}
Thus, the graph $G_{Z_t}$ contains the edges depicted in figure:

\begin{center}
\begin{tikzpicture}
  [scale=.8,auto=left,every node/.style={circle,fill=blue!20}]
  \node (nt) at (5,10.5) {$t$};
  \node (nx) at (4,9)  {$x$};
  \node (ny) at (6,9)  {$y$};
  \node (nu) at (4,7.5) {$u$};
  \node (nv) at (6,7.5)  {$v$};
  \foreach \from/\to in {nt/nx,nt/ny,nx/nu}
    \draw (\from) -- (\to);
\end{tikzpicture}
\end{center}

By the Graph Lemma, then, the free rank of $Z_t$ is at most $2$, and this contradicts the $2$-maximality of $X^*$ and $U^*$.

Finally, let us prove that $z$ is primitive. Since $X^*$ and $Y^*$ are $2$-maximal, by Proposition \ref{prop:pure} they are both pure, hence also their intersection $z^*$ is pure. But it is immediate that $z^*$ is pure if and only if $z$ is primitive.
 
\end{proof}

\begin{example}\label{exz}
Consider the two $2$-maximal monoids $\{abcab,cb\}^*$ and $\{abc,bcb\}^*$. Their intersection is $\{abcabcbcb\}^*$. The intersection of $\{a,bc\}^*$ and $\{a,cb\}^*$ is $a^*$.
\end{example}

We have shown that the intersection of two $2$-maximal submonoids is generated by at most one element. Moreover, we know that the intersection of two 1-maximal submonoids is the empty word, i.e., it is generated by zero elements. Thus, it is natural to ask if in general, for every $k \geq 1$, the intersection of two $k$-maximal submonoids is generated by at most $k-1$ elements. The following examples, suggested to us by \v{S}t\v{e}p\'an Holub, provide a negative answer to this question.

\begin{example}\label{controes}
The intersection of the two $3$-maximal monoids $\{abc,dc,bab\}^*$ and $\{ab,cb,cd\}^*$ is infinitely generated by  $abc(dc)^*bab$. The intersection of the two $4$-maximal monoids $\{a,b,cd,ce\}^*$ and $\{ac,bc,da,ea\}^*$ is $\{acea,bcea,acda,bcda\}^*$.
\end{example}

Thus, our Theorem \ref{th_case_2} is specific for rank $2$ and cannot be generalized to higher ranks.

For an upper bound on the length of the word that generates the intersection of two $2$-maximal submonoids, we have the following proposition.

\begin{proposition}\label{zbound}
With the hypothesis of Theorem \ref{th_case_2}, $$\vert z \vert < (\vert x \vert + \vert y \vert)(\vert u \vert + \vert v \vert).$$
\end{proposition}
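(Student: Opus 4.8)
The plan is to derive the bound from a state count in a product automaton recognizing $X^*\cap U^*$. By Theorem~\ref{th_case_2} we already know that $X^*\cap U^*=z^*$ for a primitive word $z$, so it suffices to bound $|z|$. By Proposition~\ref{prop:bifix}, the generating sets $X=\{x,y\}$ and $U=\{u,v\}$ of the two $2$-maximal submonoids are bifix codes, hence in particular prefix codes. The state bound for prefix codes recalled in Section~\ref{sec:prem} then furnishes a DFA $\mathcal{A}_X$ recognizing $X^*$ with at most $|x|+|y|-1$ states and a DFA $\mathcal{A}_U$ recognizing $U^*$ with at most $|u|+|v|-1$ states. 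Taking the usual product automaton of $\mathcal{A}_X$ and $\mathcal{A}_U$ yields a DFA $\mathcal{A}$ recognizing $X^*\cap U^*=z^*$ whose number of states $N$ satisfies
$$N\leq (|x|+|y|-1)(|u|+|v|-1).$$

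The core step is to prove $|z|\leq N$. First I would observe that $z$ is the shortest nonempty word of $z^*$, since the nonempty words of $z^*$ are exactly $z,z^2,z^3,\ldots$. Reading $z$ in $\mathcal{A}$ from the initial state produces states $q_0,q_1,\ldots,q_{|z|}$, with $q_{|z|}$ accepting because $z\in z^*$. I would then argue that $q_0,q_1,\ldots,q_{|z|-1}$ are pairwise distinct. Indeed, if $q_i=q_j$ with $0\leq i<j\leq |z|-1$, then excising from $z$ the factor read between these two positions produces a word $z'$ that is still accepted by $\mathcal{A}$ (by determinism the computation $q_0\to q_i=q_j\to q_{|z|}$ remains valid and ends in the accepting state $q_{|z|}$). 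Since $1\leq j-i\leq |z|-1$, the word $z'$ satisfies $0<|z'|<|z|$, so $z'$ is a nonempty word of $z^*$ strictly shorter than $z$, a contradiction. Hence $q_0,\ldots,q_{|z|-1}$ are $|z|$ distinct states of $\mathcal{A}$, giving $|z|\leq N$.

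Combining the two steps yields
$$|z|\leq (|x|+|y|-1)(|u|+|v|-1)<(|x|+|y|)(|u|+|v|),$$
where the strict inequality holds because, writing $p=|x|+|y|\geq 2$ and $s=|u|+|v|\geq 2$, one has $(p-1)(s-1)=ps-p-s+1<ps$. The main obstacle is the distinctness argument: one must make precise which coincidences among $q_0,\ldots,q_{|z|}$ are forbidden. The point is that any repetition strictly inside the computation lets us excise a nonempty proper factor and contradict the minimality of $|z|$ in $z^*$, whereas the single coincidence $q_0=q_{|z|}$ is harmless (it merely reflects that $z^*$ loops back) and does not affect the count of the first $|z|$ states. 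A secondary point to verify is that the prefix-code state bound is applicable at all, which is guaranteed here only because $X$ and $U$ generate $2$-maximal submonoids and are therefore bifix by Proposition~\ref{prop:bifix}.
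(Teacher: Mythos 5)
Your proof is correct and follows essentially the same route as the paper: both invoke Proposition~\ref{prop:bifix} to apply the prefix-code state bound from Section~\ref{sec:prem}, form the product automaton recognizing $X^*\cap U^*=z^*$, and bound $\vert z\vert$ by its number of states. The only difference is in the last step, where the paper tersely observes that the automaton consists of a single cycle labeled by $z$, while you justify $\vert z\vert\leq N$ via an explicit pumping argument using the minimality of $\vert z\vert$ among nonempty words of $z^*$ --- a sound filling-in of that step, which incidentally gives the marginally sharper bound $\vert z \vert \leq (\vert x\vert+\vert y\vert-1)(\vert u\vert+\vert v\vert-1)$.
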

\begin{proof}
Let $\mathcal{A}_X$ (resp. $\mathcal{A}_U$) be the minimal DFA recognizing $X^*$ (resp. $U^*$) and $Q_X$ (resp. $Q_U$) its set of states. Since $X$ and $U$ are bifix codes, we have $\vert Q_X \vert < \vert x \vert + \vert y \vert$ and $\vert Q_U \vert < \vert u \vert + \vert v \vert$. Then the automaton $\mathcal{A}$ recognizing $X^* \cap U^*$ has a set of states $Q$ such that $\vert Q \vert < (\vert x \vert + \vert y \vert)(\vert u \vert + \vert v \vert)$. By Theorem \ref{th_case_2}, $\mathcal{A}$ is composed by only one cycle, labeled by $z$. Thus, $\vert z \vert < (\vert x \vert + \vert y \vert)(\vert u \vert + \vert v \vert).$
 
\end{proof}

Based on our findings, we formulate the following conjecture.
\begin{conjecture}\label{conj}
Let $X^*=\{x,y\}^*$ and $U^*=\{u,v\}^*$ be $2$-maximal submonoids. If $X^*\cap U^*=z^*$, with $z$ primitive, then
$$\vert z \vert < \vert x \vert + \vert y \vert + \vert u \vert + \vert v \vert.$$
\end{conjecture}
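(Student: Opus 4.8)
The plan is to sharpen the automaton-counting argument behind Proposition~\ref{zbound}. As in that proof, let $\mathcal{A}_X$ and $\mathcal{A}_U$ be the minimal DFAs for $X^*$ and $U^*$, with state sets $Q_X$ and $Q_U$, and let $\mathcal{A}$ be the trim product automaton recognizing $X^*\cap U^*$. By Theorem~\ref{th_case_2}, $\mathcal{A}$ is a single simple cycle whose length equals $|z|$, and each of its states is a pair $(s,t)\in Q_X\times Q_U$. Since $X$ and $U$ are bifix codes, $|Q_X|\le |x|+|y|-1$ and $|Q_U|\le |u|+|v|-1$; hence it suffices to prove the purely automata-theoretic statement that the length of this cycle is at most $|Q_X|+|Q_U|$. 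Indeed, that would give $|z|\le |Q_X|+|Q_U|\le |x|+|y|+|u|+|v|-2$, which is stronger than the conjecture.

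First I would isolate the combinatorial core. Writing $z=x_1\cdots x_m=u_1\cdots u_n$ with $x_i\in X$ and $u_j\in U$, the $X$-boundaries $0=p_0<\cdots<p_m=|z|$ and the $U$-boundaries $0=q_0<\cdots<q_n=|z|$ share no interior point: if $p_i=q_j$ with $0<i<m$ and $0<j<n$, then $x_1\cdots x_i=u_1\cdots u_j$ is a proper nonempty prefix of $z$ lying in $X^*\cap U^*=z^*$, which is impossible because $z$ is primitive. Reading this on the cycle $\mathbb{Z}/|z|\mathbb{Z}$, the mark sets $\bar P=\{p_0,\dots,p_{m-1}\}$ and $\bar Q=\{q_0,\dots,q_{n-1}\}$ meet only at $0$. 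Equivalently, a state $(s,t)$ of $\mathcal{A}$ has $s$ equal to the initial (accepting) state $\iota_X$ and $t$ equal to $\iota_U$ simultaneously only at the unique position $0$.

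The heart of the argument is then to turn the injectivity of the labelling $p\mapsto (s_X(p),s_U(p))$ — which holds because $\mathcal{A}$ is a simple cycle — into an injection of the $|z|$ cycle positions into the disjoint union $Q_X\sqcup Q_U$. Concretely, I would assign to each position $p$ either its $X$-state $s_X(p)$ or its $U$-state $s_U(p)$ so that no state of $Q_X$ and no state of $Q_U$ is used twice; the existence of such a system of distinct representatives is exactly Hall's condition for the bipartite incidence between positions and $Q_X\sqcup Q_U$. The disjointness of $\bar P$ and $\bar Q$ is designed to feed this: whenever two positions share an $X$-state they must differ in their $U$-state, so at least one of them can be routed to the $U$-side, and symmetrically. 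An explicit alternative is to exploit that $\bar P$ and $\bar Q$ are two periodic mark sets on a cycle of length $|z|$, whose gap structure is governed by a three-distance phenomenon; one then attempts to build the assignment arc by arc along the common refinement of the two factorizations, which has exactly $m+n-1$ pieces.

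The step I expect to be the genuine obstacle, and the reason the statement is posed only as a conjecture, is establishing the \emph{global} consistency of this assignment, i.e.\ verifying Hall's condition for every set of positions rather than locally. A deficient family would be a run of positions whose $X$- and $U$-states together span fewer states than their own number, and ruling this out seems to require controlling how the two factorizations of $z$ interleave over long stretches, which is precisely where the periodicity of $z$ and the primitivity constraint must be used quantitatively. A complementary line of attack, suggested by the additive shape of the bound, is a Fine--Wilf type argument: assuming $|z|\ge |x|+|y|+|u|+|v|$, one would try to force a common period on a sufficiently long prefix of $z$, and thereby a nontrivial relation between $\{x,y\}$ and $\{u,v\}$ contradicting the $2$-maximality invoked in Theorem~\ref{th_case_2}. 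The difficulty there is that the two factorizations are not literal periods of a single word, so the classical Fine--Wilf theorem does not apply directly and would have to be replaced by a periodicity lemma tailored to two interleaved bifix factorizations.
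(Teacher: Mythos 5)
The statement you are proving is posed in the paper as a \emph{conjecture}: the paper contains no proof of it, and your text does not close that gap --- it is a proof plan whose decisive step you yourself flag as unresolved. To separate what works from what doesn't: your boundary-disjointness lemma is correct (if $p_i=q_j$ with $0<i<m$, $0<j<n$, then $x_1\cdots x_i\in X^*\cap U^*=z^*$ is a nonempty word shorter than $z$, impossible), as is the injectivity of the pair-labelling $p\mapsto(s_X(p),s_U(p))$ along the cycle, and the reduction of your target bound to the claim that a simple cycle in the trim product has length at most $|Q_X|+|Q_U|$. But that claim, which you call ``purely automata-theoretic,'' is false at that level of generality: a simple cycle in a product of automata can have length up to the least common multiple of cycle lengths in the factors. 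For instance, the minimal DFAs of $(aa)^*$ and $(aaa)^*$ have $2$ and $3$ non-sink states, and their product contains the simple cycle of length $6$ recognizing $(a^6)^*$, with $6>2+3$. Those monoids are of course not $2$-maximal, but the example shows exactly how Hall's condition fails for grid-like labellings: $k$ positions spanning $a$ distinct $X$-states and $b$ distinct $U$-states can number up to $ab$, while Hall requires $k\le a+b$. Your disjoint-marks lemma constrains only the single pair in which both coordinates are initial states; it says nothing about deficient grid patterns elsewhere on the cycle, so the entire combinatorial content of the conjecture --- ruling such patterns out using $2$-maximality quantitatively --- is precisely what remains unproved.

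Two further cautions. First, your route targets $|z|\le|x|+|y|+|u|+|v|-2$, strictly stronger than the conjectured $|z|<|x|+|y|+|u|+|v|$; the paper offers no evidence for the stronger bound, and if the conjecture is tight (some pair of $2$-maximal submonoids achieving $|z|=|x|+|y|+|u|+|v|-1$), the SDR approach as formulated is doomed, so it would be prudent to search for extremal examples before investing in Hall-type machinery (Example~\ref{exz} gives $|z|=9$ against your bound $11$ --- consistent but uninformative). Second, your Fine--Wilf alternative is likewise only a direction: you correctly observe that the two factorizations of $z$ are not literal periods, so the classical periodicity lemma does not apply, but you do not supply the tailored replacement. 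The net assessment is that you have a sound reduction framework, one correct lemma, and an honest identification of the obstruction --- but the conjecture remains open after your attempt, as it does in the paper itself, where only the multiplicative bound of Proposition~\ref{zbound} is actually established.
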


\section{Primitive Sets}

We now show how the previous results can be interpreted in the terminology of combinatorics on words.
Let us start with the remark that a word $x\in A^+$ is primitive if and only if, for every $u\in A^+$, $$x^* \subseteq u^* \Rightarrow x=u.$$
With our definition of maximality, we have that a word $x\in A^+$ is primitive if and only if the monoid $x^*$ is $1$-maximal.
Inspired by this observation, we give the following definition.

\begin{definition}
	A finite set $X \subseteq A^*$  is primitive if it is the basis of a $|X|$-maximal submonoid. 
\end{definition}

The following proposition is an easy consequence of the definition of primitive set.
\begin{proposition}
  Any subset of a primitive set is primitive.  
  \begin{proof}
    Let $X \subseteq A^*$ be a primitive set and let $Y$ be a subset of $X$. If $Y$ is not primitive then there exists a set $Z \neq Y$ such that $\vert Z \vert \leq \vert Y \vert$ and $Y \subseteq Z^*.$ It follows that the set $F=X \setminus Y \cup Z$ is such that $\vert F \vert \leq \vert X \vert$ and $X \subseteq F^*$ contradicting the primitiveness of $X$. 
  \end{proof}
\end{proposition}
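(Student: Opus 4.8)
The plan is to argue by contradiction, reducing the statement to the maximality characterization of primitiveness. Let $X$ be primitive and let $Y\subseteq X$; suppose toward a contradiction that $Y$ is \emph{not} primitive. The strategy is to turn a witness of the non-maximality of $Y^*$ into a witness of the non-maximality of $X^*$, by replacing the block $Y$ inside $X$ with a cheaper generating set.

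First I would record the working form of primitiveness. Since $X$ is primitive, $X^*$ is $|X|$-maximal, and by Proposition~\ref{prop:bifix} the set $X$ is a bifix code; hence $X^*$ is free with basis $X$, and $X$ is contained in every generating set of $X^*$. This yields the characterization I will use: if $F$ satisfies $|F|\le |X|$ and $X\subseteq F^*$, then either $F=X$ or $X^*\subsetneq F^*$ contradicts $|X|$-maximality. On the other side, the failure of $Y$ to be primitive means $Y^*$ is not $|Y|$-maximal, so there is a submonoid in $\mathcal{M}_{|Y|}$ strictly containing $Y^*$; taking its minimal generating set $Z$ produces a set with $|Z|\le |Y|$, $Y\subseteq Z^*$, and $Z^*\supsetneq Y^*$, so in particular $Z\neq Y$.

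Next I would build a single competitor for $X$ by setting $F=(X\setminus Y)\cup Z$. Then $|F|\le |X\setminus Y|+|Z|\le |X\setminus Y|+|Y|=|X|$, and $X\subseteq F^*$: words of $X\setminus Y$ lie in $F\subseteq F^*$, while words of $Y$ lie in $Z^*\subseteq F^*$ because $Z\subseteq F$. Thus $F^*\in\mathcal{M}_{|X|}$ and $X^*\subseteq F^*$.

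The crux, and the step I expect to be the main obstacle, is verifying that $F$ genuinely witnesses non-maximality, i.e. that $F\neq X$ (equivalently $X^*\subsetneq F^*$): the construction is cheap, but one must rule out the degenerate possibility $F=X$. Here I would argue directly: if $F=X$, then every $y\in Y\subseteq X=F$ must come from $Z$ (since $y\notin X\setminus Y$), so $Y\subseteq Z$; combined with $|Z|\le |Y|$ this forces $Z=Y$, contradicting $Z\neq Y$. Hence $F\neq X$, and together with $|F|\le |X|$ and $X\subseteq F^*$ this contradicts the primitiveness of $X$ via the characterization above. The contradiction shows $Y$ must be primitive, completing the proof.
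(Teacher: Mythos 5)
Your proposal is correct and follows essentially the same route as the paper: the paper's proof uses exactly the same witness $Z$ for the non-primitivity of $Y$ and the same competitor $F=(X\setminus Y)\cup Z$ with $|F|\le|X|$ and $X\subseteq F^*$. Your extra step verifying $F\neq X$ (via $Y\subseteq Z$ forcing $Z=Y$) and upgrading this to $X^*\subsetneq F^*$ using that $X$ is the minimal generating set of $X^*$ is a welcome detail the paper leaves implicit, but it does not change the argument's structure.
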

In particular, any element of a primitive set is a primitive word.

\begin{remark}
The definition of primitive set does not coincide with that of elementary set. A set $X$ is said to be elementary if $r(X)=\vert X \vert $. If $X$ is primitive, then $r(X)=\vert X \vert$, i.e., it is elementary. But there exist elementary sets that are not primitive. For instance, the set $\{aa,bca\}$ is elementary, but it is not primitive since $\{aa,bca\}^{*} \subseteq \{a,bc\}^*$. 
\end{remark}

From the definition of primitive set, we have that for every set $X$ there exists a primitive set $Y$ such that $X \subseteq Y^*$. The set $Y$ is therefore called a \emph{primitive root} of $X$. However, the primitive root of a set is not, in general, unique. Consider for instance the set $X = \{abcbab,abcdcbab,abcdcdcbab\}$. It has rank $3$, hence it is elementary, yet it is not primitive. Indeed, $X \subseteq \{ab,cb,cd\}^*$. The set $\{ab,cb,cd\}$ is primitive, and it is a primitive root of $X$. However, it is not the only primitive root of $X$: the set $\{abc,dc,bab\}$ is primitive and $X \subseteq \{abc,dc,bab\}^*$, hence $\{abc,dc,bab\}$ is another primitive root of $X$. In the special case of sets of rank $1$, clearly these always have a unique primitive root. For instance, the primitive root of the set $\{abab,abababab\}$ is the set $\{ab\}$.

However, as a consequence of Theorem \ref{th_case_2} we have the following result.

\begin{theorem}\label{thm:pair}
A set $X$ of rank $2$ has a unique primitive root.
\end{theorem}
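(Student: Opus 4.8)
The plan is to derive \Cref{thm:pair} directly from \Cref{th_case_2}. Let $X$ be a set of rank $2$. By definition of rank, there exists a primitive root $Y$ of $X$, that is, a primitive set $Y$ with $X\subseteq Y^*$; since $X$ has rank $2$ we have $|Y|=2$, so $Y^*$ is a $2$-maximal submonoid containing $X^*$. To establish uniqueness, I would suppose that $Y$ and $Y'$ are both primitive roots of $X$, so that $Y^*$ and $(Y')^*$ are both $2$-maximal submonoids with $X\subseteq Y^*\cap (Y')^*$. The goal is then to prove that $Y^*=(Y')^*$, which forces $Y=Y'$ because a submonoid has a unique minimal generating set (as recalled in \Cref{sec:prem} via \eqref{base}).

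The key step is to show $Y^*=(Y')^*$ using the intersection theorem. First I would note that the rank-$2$ set $X$ cannot have rank $1$: if $X\subseteq w^*$ for a single word $w$, then $r(X)=1$, contrary to the assumption that $r(X)=2$. Hence $X$ contains two words that do not commute, and consequently $X^*$, and \emph{a fortiori} $Y^*\cap (Y')^*$, contains two noncommuting words. Now suppose toward a contradiction that $Y^*\neq (Y')^*$. Then $Y^*$ and $(Y')^*$ are \emph{distinct} $2$-maximal submonoids with nonempty intersection, so \Cref{th_case_2} applies and yields a primitive word $z$ with $Y^*\cap (Y')^*=z^*$. But $z^*$ is a set of rank $1$ that contains the two noncommuting words of $X$, which is impossible since any two words lying in $z^*$ are powers of $z$ and therefore commute. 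This contradiction gives $Y^*=(Y')^*$, hence $Y=Y'$.

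The main obstacle, such as it is, lies entirely in verifying that the hypotheses of \Cref{th_case_2} are genuinely met, rather than in any further combinatorial work: one must confirm that each primitive root indeed generates a $2$-maximal submonoid (which holds precisely because $Y$ is a primitive set of cardinality $2$, by the definition of primitiveness) and that the intersection is nonempty (guaranteed by $X\subseteq Y^*\cap (Y')^*$ with $X\neq\emptyset$). Once these conditions are in place, the contradiction between ``rank $2$'' and ``contained in a rank-$1$ submonoid'' is immediate, since the defining property $r(X)=2$ exactly rules out $X$ sitting inside any $z^*$. I would close by remarking that this argument is exactly the promised consequence stated in the introduction: for every submonoid generated by two noncommuting words there is a \emph{unique} $2$-maximal submonoid containing it, and this is equivalent to the uniqueness of the primitive root for rank-$2$ sets.
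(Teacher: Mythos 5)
Your proof is correct and follows essentially the same route as the paper: both apply \Cref{th_case_2} to two putative primitive roots and derive the contradiction that $X$ would lie in $z^*$ for a single primitive word $z$, i.e., would have rank $1$. Your version merely spells out the details the paper leaves implicit --- checking the hypotheses of \Cref{th_case_2} (distinctness of the two $2$-maximal submonoids and nonemptiness of the intersection) and invoking the uniqueness of the minimal generating set to pass from $Y^*=(Y')^*$ to $Y=Y'$.
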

\begin{proof}
	  If $\{u_1, u_2\}$ and $\{v_1, v_2\}$ are two primitive roots of  $X$  then  $X^* \subseteq  \{u_1, u_2\}^* \cap \{v_1, v_2\}^*$. Hence, by Theorem \ref{th_case_2}, $X \subseteq \{z\}^*$, for some primitive word $z$,  i.e., $r(X)= 1$, a contradiction.
\end{proof}

In what follows, we find convenient to call a primitive set of cardinality $2$ a \emph{primitive pair}.

\begin{example}
The words $abca$ and $bc$ are primitive words, yet the pair $\{abca,bc\}$ is not a primitive pair, since $\{abca,bc\}^* \subseteq \{a, bc\}^*$, hence $\{abca,bc\}^*$ is not $2$-maximal.
The pair $\{abcabc, bcabca\}$ can be written as concatenations of copies of both  $\{abca,bc\}$ and $\{a, bcabc\}$.
However, there is a unique way to decompose each word of the pair $\{abcabc, bcabca\}$ as a concatenation of words of a primitive pair, and this pair is $\{a, bc\}$. In fact, the primitive root of $\{abcabc, bcabca\}$ is $\{a, bc\}$.
\end{example}

As it is well known, a primitive word $x$ does not have internal occurrences in $xx$. The next Theorem \ref{thm:cube} provides a similar property in the case of a primitive set of two words.

The following Lemma is a classical result in combinatorics on words, originally due to Lyndon and Sch\"utzenberger \cite{lyndon1962} (cf.~also~\cite{Lothaire1,ChoffKar,LucaV99}).

\begin{lemma}\label{lem:equation}
Let $t,v$ be nonempty words such that $tu=uv$ and $t\neq v$. Then there exists a unique pair of words $(p,q)$ and a unique positive integer $m$ such that $pq$ is primitive and 
\[t=(pq)^m,\ v=(qp)^m,\ u\in (pq)^*p.\]
 \end{lemma}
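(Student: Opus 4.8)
The plan is to recognize the equation $tu=uv$ as the classical conjugacy equation and to read off $p,q,m$ from the primitive root of $t$. First I would iterate the hypothesis: from $tu=uv$ one obtains $t^ku=uv^k$ for every $k\ge 1$ by a trivial induction, since $t^ku=t^{k-1}(tu)=t^{k-1}uv=uv^{k-1}v=uv^k$. Let $\rho$ be the primitive root of $t$, so that $t=\rho^m$ with $\rho$ primitive and $m\ge 1$ uniquely determined. Since $uv^k=t^ku=\rho^{mk}u$, the word $u$ is a prefix of $\rho^{mk}u$; choosing $k$ large enough that $|\rho^{mk}|\ge|u|$ forces $u$ to be a prefix of $\rho^{mk}$, hence $u=\rho^{j}p$ for some $j\ge 0$ and some prefix $p$ of $\rho$ with $|p|=|u|\bmod|\rho|$. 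Writing $\rho=pq$, one immediately gets $t=\rho^m=(pq)^m$ and $u=(pq)^jp\in(pq)^*p$.

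For the remaining equation I would substitute $u=\rho^jp$ into $tu=uv$ to obtain $\rho^{m+j}p=\rho^jpv$; cancelling the common prefix $\rho^j$ gives $\rho^mp=pv$. Using the identity $(pq)^mp=p(qp)^m$, the left-hand side equals $p(qp)^m$, so cancelling $p$ yields $v=(qp)^m$. This establishes existence. Here the hypothesis $t\ne v$ is exactly what guarantees the decomposition is nondegenerate: if $p=\epsilon$ then $t=q^m=v$, and if $q=\epsilon$ then $t=p^m=v$, both excluded. Hence $0<|p|<|\rho|$ and $p$ is a proper nonempty prefix of $\rho$.

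For uniqueness, suppose $t=(p'q')^{m'}$, $v=(q'p')^{m'}$, $u\in(p'q')^*p'$ with $p'q'$ primitive. Then $p'q'$ is a primitive word satisfying $(p'q')^{m'}=t$, so by uniqueness of the primitive root $p'q'=\rho=pq$ and $m'=m$. Finally $u\in\rho^*p\cap\rho^*p'$ forces $|p|\equiv|u|\equiv|p'|\pmod{|\rho|}$; since both $p$ and $p'$ are proper prefixes of $\rho$ (again by $t\ne v$), this gives $|p|=|p'|$, hence $p=p'$ as the prefix of $\rho$ of that length, and therefore $q=q'$.

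I expect the main obstacle to be the length bookkeeping that pins $p$ down as a proper prefix of $\rho$. One must verify that $u$ is genuinely a prefix of a power of $\rho$, and not merely of $t$, and check that the hypothesis $t\ne v$ rules out the degenerate splittings $p=\epsilon$ and $q=\epsilon$, since it is precisely these that would otherwise destroy uniqueness. The algebraic steps — the iteration $t^ku=uv^k$, the free-monoid cancellations, and the identity $(pq)^mp=p(qp)^m$ — are routine once the correct $\rho$, $j$ and $p$ have been identified.
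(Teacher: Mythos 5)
Your proof is correct and complete. One thing to be aware of: the paper does not actually prove Lemma~\ref{lem:equation} --- it states it as a classical result of Lyndon and Sch\"utzenberger, with citations to the literature --- so there is no in-paper argument to compare against; what you have written is the standard textbook proof of the conjugacy equation, made self-contained. The steps all check: iterating $tu=uv$ to $t^ku=uv^k$, so that $u$ is a prefix of $\rho^{mk}u$ for the primitive root $\rho$ of $t$, and taking $k$ with $|\rho^{mk}|\geq |u|$ legitimately places $u=\rho^jp$ with $p$ a prefix of $\rho$ of length $|u|\bmod|\rho|$; cancelling $\rho^j$ and then $p$ via $(pq)^mp=p(qp)^m$ gives $v=(qp)^m$. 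You also correctly isolate the two points that genuinely need care. First, the hypothesis $t\neq v$ is what rules out the degenerate splittings: if $p=\epsilon$ or $q=\epsilon$ then $v=(qp)^m=\rho^m=t$, so under the hypothesis both $p$ and $q$ are nonempty and $p$ is a \emph{proper} prefix of $\rho$ (note that $u=\epsilon$ is likewise excluded, since it would force $t=v$). Second, your uniqueness argument is sound: any competing pair $(p',q')$ with $p'q'$ primitive and $(p'q')^{m'}=t$ must satisfy $p'q'=\rho$ and $m'=m$ by uniqueness of the primitive root, and $t\neq v$ again forces $q'\neq\epsilon$, so $p'$ is a proper prefix of $\rho$; then $u\in\rho^*p\cap\rho^*p'$ gives $|p|\equiv|p'|\pmod{|\rho|}$ with both lengths in $[0,|\rho|)$, whence $p=p'$ and $q=q'$ by cancellation. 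Nothing is missing.
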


 \begin{theorem}\label{thm:cube}
 Let $\{x,y\}$ be a primitive pair. Then neither $xy$ nor $yx$ occurs internally in a word of $\{x,y\}^3$.
 \end{theorem}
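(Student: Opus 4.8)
The plan is to argue by contradiction, and by the symmetry of the statement under exchanging the roles of $x$ and $y$ (which leaves the set $\{x,y\}$, and hence the hypothesis, unchanged) it suffices to rule out an internal occurrence of $xy$. So suppose $w=w_1w_2w_3$ with each $w_i\in\{x,y\}$, and that $xy$ occurs internally, say $w=\alpha\, xy\,\beta$ with $\alpha,\beta\in A^+$. Throughout I would use that $\{x,y\}$ is a bifix code (Proposition~\ref{prop:bifix}), so that every word of $\{x,y\}^*$ has a \emph{unique} factorization over $\{x,y\}$, and that $x,y$ are primitive (Proposition~\ref{prop:pure}).

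First I would dispose of the \emph{aligned} case, in which $\alpha\in\{x,y\}^*$. Since $x,y\in\{x,y\}$, the word $\alpha xy$ then also lies in $\{x,y\}^*$; as $\{x,y\}$ is a prefix code and $\alpha xy$ is a prefix of $w\in\{x,y\}^*$, the complementary suffix $\beta$ must lie in $\{x,y\}^*$ as well. Hence $\alpha\cdot x\cdot y\cdot\beta$ is a factorization of $w$ into at least $|\alpha|_{\{x,y\}}+1+1+|\beta|_{\{x,y\}}\ge 4$ code words, contradicting the uniqueness of the three-factor factorization $w_1w_2w_3$. The mirror computation, using that $\{x,y\}$ is also a suffix code, shows that the case $\beta\in\{x,y\}^*$ is likewise impossible. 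Consequently it remains to treat the \emph{unaligned} case, where neither $\alpha$ nor $\beta$ belongs to $\{x,y\}^*$.

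The unaligned case is the heart of the argument. Here the occurrence of $xy$ begins strictly inside some block $w_{i+1}$ and ends strictly inside a later block $w_k$; since a single block is shorter than $|xy|=|x|+|y|$, the occurrence must straddle at least one factor boundary, and because $w$ has only three blocks there are just a handful of configurations (the occurrence covers two adjacent blocks crossing one boundary, or three blocks engulfing the middle one, together with their mirror images). In each configuration I would write $w_{i+1}=r\,g$ and $w_k=e\,r'$, so that the proper suffix $g$ of $w_{i+1}$ is a prefix of the occurrence and the proper prefix $e$ of $w_k$ is a suffix of it, and then read the overlapped region in the two available ways. Cancelling the engulfed full blocks between these two readings produces a word equation $t\,\sigma=\sigma\,v$ of the form treated in Lemma~\ref{lem:equation}, in which the factor $\sigma$ carries a copy of $x$ or of $y$.

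The delicate final step is to extract the contradiction from this equation. Applying Lemma~\ref{lem:equation} (with $u=\sigma$) when $t\ne v$, and the elementary commutation lemma when $t=v$, I obtain nonempty words $p,q$ for which $t,v$ are powers of the conjugate words $pq,qp$ and $\sigma\in(pq)^*p$; unwinding the relations $w_{i+1}=rg$ and $w_k=er'$ then expresses both $x$ and $y$ as products over the two-element set $\{p,q\}$. Thus $\{x,y\}^*\subseteq\{p,q\}^*$ with $\{p,q\}$ a strictly larger two-generated submonoid, which contradicts the $2$-maximality of $\{x,y\}^*$ guaranteed by the hypothesis that $\{x,y\}$ is a primitive pair. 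I expect the main obstacle to be precisely the bookkeeping of the several overlap configurations together with the degenerate case $t=v$, since each of them must be shown to funnel into the same conclusion that $x$ and $y$ share a common two-word ``super-alphabet''.
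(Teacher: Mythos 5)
Your plan has the same skeleton as the paper's proof: reduce to $xy$ by symmetry, get primitivity of $x$ and $y$ from purity of $2$-maximal submonoids, derive from an overlap a conjugacy equation handled by Lemma~\ref{lem:equation}, and conclude $x,y\in\{p,q\}^*$, contradicting $2$-maximality. Your treatment of the aligned cases is correct and pleasantly uniform: since $\{x,y\}$ is a bifix code (Proposition~\ref{prop:bifix}), an occurrence with $\alpha\in\{x,y\}^*$ (or, mirror, $\beta\in\{x,y\}^*$) yields a factorization of $w$ into at least four code words, against uniqueness of the three-word factorization. The paper instead disposes of the easy configurations one by one ($xxx$, $yyy$, $xxy$, $xyy$ via primitivity of $x$ or $y$; $xyx$, $yxy$ via primitivity of $xy$, which also follows from purity). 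Note that you never invoke primitivity of $xy$; without it, configurations like an unaligned occurrence in $xyx$ get funneled into your heavy machinery instead of dying in one line inside $(xy)^2=xyx\cdot y$.

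The genuine gap is in the unaligned case, which is where all the content of the theorem lives. Two problems. First, ``cancelling the engulfed full blocks \ldots produces a word equation $t\sigma=\sigma v$'' is not an accurate description of how the equation arises: it comes from a self-overlap of $x$ with a block $x$, or of $y$ with the block $y$, and in the configuration $w=yxx$ with $|y|>|x|$ there are two genuinely distinct sub-configurations (the paper's Cases 1 and 2, according to whether the occurrence's $y$-part overlaps the leading block $y$), with the equation living on $x$ in one and on $y$ in the other; your recipe does not distinguish them. Second, and more seriously, the assertion that ``unwinding the relations \ldots expresses both $x$ and $y$ as products over $\{p,q\}$'' is exactly the step that does not follow formally from $t=(pq)^m$, $v=(qp)^m$, $\sigma\in(pq)^*p$. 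One of the two words is immediate (e.g.\ $x=t\sigma=(pq)^{m+i}p$), but the other is only known through a \emph{fragment} --- a proper prefix $v'$ of $v$, or a proper prefix $u'$ of $u$, or $x$ sitting as a suffix of $t$ and prefix of $v$ --- and a priori the cut could fall strictly inside a $p$ or a $q$, in which case $y\in\{p,q\}^*$ simply fails. The paper closes this by an alignment argument: since $pq$ is primitive it has no internal occurrence in $pqpq$, and $pq\neq qp$, which together with the fact that $y$ ends in $t$ forces the fragment to have the exact form $(qp)^iq$ (resp.\ $(pq)^i$). You flagged this as ``the main obstacle,'' and correctly so, but flagging it is not proving it: as written, the proposal stops precisely at the crux of the theorem.
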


\begin{proof}
 By symmetry, it is sufficient to prove the statement for $xy$.

 Since $\{x,y\}$ is a primitive pair, we have that both $x$ and $y$ are primitive words. Moreover, also the word $xy$ is primitive. Indeed, $\{x,y\}^*$ is pure, so if $xy=t^n$, $n>1$, then $t\in \{x,y\}^*$.

We will show that for any $w\in \{x,y\}^3$, $xy$ cannot occur internally in $w$. The cases $w=xxx$ and $w=yyy$ are trivial, as $x$ (resp.~$y$) cannot have an internal occurrence in $xx$ (resp.~in $yy$) because $x$ (resp. $y$) is primitive. Let us consider the cases $w=xyx$ and $w=yxy$.  If $xy$ occurs internally in $w=xyx$ (resp.~in $w=yxy$), then so it does in $(xy)^2=wy$ (resp.~in $=xw$), in contradiction with the fact that $xy$ is primitive.

 In the cases $w=xxy$ and $w=xyy$, $x$ (resp.~$y$), would have an internal occurrence in $xx$ (resp.~$yy$), against the primitiveness of $x$ (resp.~of $y$).

 The remaining cases are $w=yxx$, and $w=yyx$. Let us prove the case $w=yxx$.

 Let us first suppose $|y|>|x|$. We have two subcases:
 \begin{enumerate}
 \item \label{it1} the internal occurrence of $y$ does not overlap with the prefix $y$;
 \item\label{it2} the internal occurrence of $y$ overlaps with the prefix $y$.
 \end{enumerate}

\begin{figure}[thb]
\centering

\begin{tikzpicture}[xscale=.6]

\draw [thick](0,0) -- (16,0);
\draw [thick](0,-.1) -- (0, .1);
\draw [thick](6,-.1) -- (6, .1);
\draw [thick](11,-.1) -- (11, .1);
\draw [thick](16,-.1) -- (16, .1);
\node[below] at (3,.5)%
    {$y$};
\node[below] at (8.5,.5)%
    {$x$};
\node[below] at (13.5,.5)%
    {$x$};
\draw [thick]  (4,-1) -- (15,-1);
\draw [thick](4,-.9) -- (4, -1.1);
\draw [thick](9,-.9) -- (9, -1.1);
\draw [thick](15,-.9) -- (15, -1.1);
\node[below left] at (7,-1.1)%
    {$x$};
\node[below left] at (12.5,-1.1)%
    {$y$};

 \node[below] at (5,-.3)%
    {$t$};
\node[below] at (7.5,-.3)%
    {$u$};
    \node[below] at (10,-.3)%
    {$v$};

\draw [dotted] (4,-1) -- (4,0);
\draw[dotted] (6,-1) -- (6,0);
\draw[dotted] (9,-1) -- (9,0);
\draw[dotted] (11,-1) -- (11,0);

\end{tikzpicture}

\caption{Proof of Theorem \ref{thm:cube},  $w=yxx$, $|y|>|x|$, Case \ref{it1}: the internal occurrence of $y$ does not overlap with the prefix $y$.} \label{fig:cube1}
\end{figure}
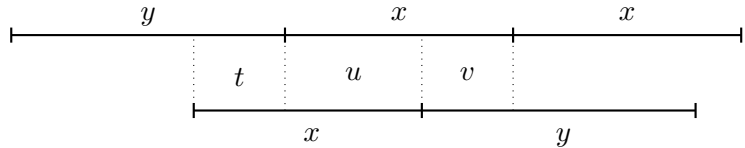

 {\bf Case \ref{it1}}. Since $\{x, y\}$ is a primitive pair, $x$ cannot be a suffix of $y$. Then it follows that $x$ has a non-empty overlap $u$ with itself (see Figure \ref{fig:cube1}).  Therefore, $y$ has a nonempty prefix $v$ and a nonempty suffix $t$ 
 such that $x=tu=uv$. Clearly, $t\neq v$, otherwise $x=tu=ut$ would not be primitive (a word that can be written as the concatenation of two nonempty words that commute is a power of a shorter word). By Lemma~\ref{lem:equation}, we have $t=(pq)^m$, $v=(qp)^m$ and $u\in (pq)^*p$. Now, the internal occurrence of $y$ is a prefix of $vx=vuv$ and it is longer than $x=tu$, so it is of the form $y=vuv'$ for some prefix $v'$ of $v$. Now, $t$ is a suffix of $y$ such that $|t|=|v|>|v'|$. Therefore, since $pq$ cannot occur internally in $pqpq$ (as, by Lemma~\ref{lem:equation}, $pq$ is primitive), and $pq\neq qp$, we have that $v'$ must be of the form $v'=(qp)^iq$ for some $i$. Thus, both $x$ and $y$ belong to $\{p,q\}^*$, against the hypothesis that $\{x, y\}$ is a primitive pair.

 \begin{figure}[htp]
\centering
\begin{tikzpicture}[xscale=.6]

\draw [thick]  (0,0) -- (16,0);
\draw [thick](0,-.1) -- (0, .1);
\draw [thick](8,-.1) -- (8, .1);
\draw [thick](12,-.1) -- (12, .1);
\draw [thick](16,-.1) -- (16, .1);
\node[below] at (4,.5)%
    {$y$};
\node[below] at (10,.5)%
    {$x$};
\node[below] at (14,.5)%
    {$x$};
\draw [thick]  (1,-1) -- (13,-1);
\draw [thick](1,-.9) -- (1, -1.1);
\draw [thick](5,-.9) -- (5, -1.1);
\draw [thick](13,-.9) -- (13, -1.1);
\node[below] at (3,-1.1)%
    {$x$};
\node[below left] at (10,-1.1)%
    {$y$};

 \node[below] at (2.5,-.3)%
    {$t$};
\node[below] at (6.5,-.3)%
    {$u$};
    \node[below] at (10.5,-.3)%
    {$v$};

\draw [dotted] (0,-1) -- (0,0);
\draw[dotted] (5,-1) -- (5,0);
\draw[dotted] (8,-1) -- (8,0);
\draw[dotted] (13,-1) -- (13,0);
\end{tikzpicture}
\caption{Proof of Theorem \ref{thm:cube}, $w=yxx$, $|y|>|x|$, Case \ref{it2}: the internal occurrence of $y$ overlaps with the prefix $y$.}
\label{fig:cube2}
\end{figure}
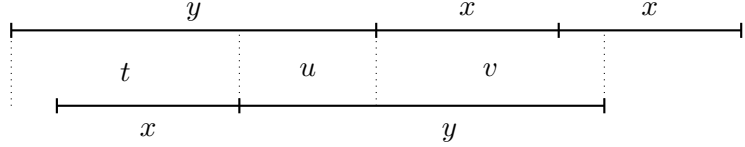

{\bf Case \ref{it2}}.  Let us now suppose that $y$ has an overlap $u$ with itself (see Figure \ref{fig:cube2}). Then we can write $y=tu=uv$, with $u\neq v$ since $y$ is primitive, and in this case $x$ is a suffix of $t$ and a prefix of $v$. By Lemma~\ref{lem:equation}, we have $t=(pq)^m$, $v=(qp)^m$ and $u\in (pq)^*p$. It follows that $x$ has the form $(qp)^iq$. Thus, both $x$ and $y$ belong to $\{p,q\}^*$, against the hypothesis that $\{x, y\}$ is a primitive pair.

\medskip

\begin{figure}[htp]
\centering
\begin{tikzpicture}[xscale=.6]

\draw [thick]  (0,0) -- (19,0);
\draw [thick](0,-.1) -- (0, .1);
\draw [thick](5,-.1) -- (5, .1);
\draw [thick](12,-.1) -- (12, .1);
\draw [thick](19,-.1) -- (19, .1);
\node[below] at (2.5,.5)%
    {$y$};
\node[below] at (8.5,.5)%
    {$x$};
\node[below] at (15.5,.5)%
    {$x$};

\draw [thick]  (3,-1) -- (15,-1);
\draw [thick](3,-.9) -- (3, -1.1);
\draw [thick](10,-.9) -- (10, -1.1);
\draw [thick](15,-.9) -- (15, -1.1);
\node[below] at (6.5,-1.1)%
    {$x$};
\node[below] at (12.5,-1.1)%
    {$y$};

 \node[below] at (4,-.3)%
    {$t$};
\node[below] at (7.5,-.3)%
    {$u$};
    \node[below] at (11,-.3)%
    {$v$};

\draw [dotted] (3,-1) -- (3,0);
\draw[dotted] (5,-1) -- (5,0);
\draw[dotted] (10,-1) -- (10,0);
\draw[dotted] (12,-1) -- (12,0);
\end{tikzpicture}
\caption{Proof of Theorem \ref{thm:cube}, $w=yxx$, $|y|\leq|x|$.}
\label{fig:cube3}
\end{figure}
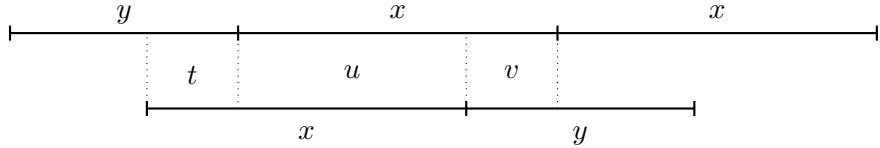

Let now $|y|\leq |x|$.

The internal occurrence of $xy$ must begin before the end of the prefix $y$ of $w$, otherwise, if the occurrence of $xy$ starts exactly where the prefix $y$ ends then $y$ is a prefix of $x$ against the hypothesis of primitive pair, if it starts after then $x$ would have an internal occurrence in $xx$ , against the hypothesis that $x$ is primitive. So, $x$ has an overlap $u$ with itself (see Figure \ref{fig:cube3}).

As in Case \ref{it1}, $y$ has a prefix $v$ and a suffix $t$ 
such that $x=tu=uv$. Clearly, $t\neq v$, otherwise $x=tu=ut$ would not be primitive. By Lemma~\ref{lem:equation}, we have $t=(pq)^m$, $v=(qp)^m$ and $u\in (pq)^*p$. Now, the internal occurrence of $y$ is a prefix of $vx=vuv$ but now it is shorter than $x=vu$, as $|y|\leq |x|$, so it is of the form $y=vu'$ for some prefix $u'$  of $u$. Since $pq$ cannot occur internally in $pqpq$ (as, by Lemma~\ref{lem:equation}, $pq$ is primitive), and $pq\neq qp$, and since $y$ ends in $t$, we have that $u'$ must be of the form $u'=(pq)^i$ for some $i>0$. Thus, both $x$ and $y$ belong to $\{p,q\}^*$, against the hypothesis that $\{x, y\}$ is a primitive pair.

The case $w=yxx$ is now proved.

The proof of the case $w=yyx$ is analogous.
 \end{proof}

\begin{remark}
In the statement of Theorem~\ref{thm:cube}, the hypothesis that $\{x,y\}$ is a primitive pair cannot be replaced by simply requiring that $x$ and $y$ are primitive words. As an example, let $x=abcabca$, $y=bcaabcabc$; the word $xy$ has an internal occurrence in $yxx$, yet $\{x,y\}\subset \{a,bc\}^*$. 

It is also worth noticing that the converse of Theorem~\ref{thm:cube} does not hold, in general. For example, $\{abcaa, bc\}$ is not primitive ($\{abcaa,bc\}^* \subseteq \{a, bc\}^*$), yet neither $abcaabc$ nor $bcabcaa$ occurs internally in a word of $\{x,y\}^3$.
\end{remark}

Moreover, we can infer the following properties.

\begin{corollary}\label{cor:factor}
Let $\{x,y\}$ be a primitive pair and $w$ a primitive word in $\{x,y\}^2\{x,y\}^*$. For all $u,v \in A^*$, if $uwv \in \{x,y\}^*$ then $u,v \in \{x,y\}^*.$
\end{corollary}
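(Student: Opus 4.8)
The plan is to turn Theorem~\ref{thm:cube} into a synchronization statement for occurrences of the blocks $xy$ and $yx$. Since $\{x,y\}$ is a primitive pair, Proposition~\ref{prop:bifix} gives that $\{x,y\}$ is a bifix code, so $\{x,y\}^*$ is free and each of its elements has a unique factorization over $\{x,y\}$; as noted in the proof of Theorem~\ref{thm:cube}, the words $xy$ and $yx$ are themselves primitive. Fix the factorizations $uwv=z_1\cdots z_k$ and $w=s_1\cdots s_m$ with all $z_i,s_j\in\{x,y\}$ and $m\ge 2$, and call the positions $|u|+|s_1\cdots s_i|$ (for $0\le i\le m$) the \emph{$w$-cuts}. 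I would first reduce the claim to showing that a single internal $w$-cut coincides with a cut of $z_1\cdots z_k$. Indeed, suppose some internal $w$-cut, say $|u|+|s_1\cdots s_i|$, is an ambient cut. Reading the ambient factorization rightward from that point, the next ambient factor lies in $\{x,y\}$ and is a prefix of $s_{i+1}s_{i+2}\cdots$, hence, by the prefix-code property, equals $s_{i+1}$; iterating, every $w$-cut to its right, up to $|u|+|w|$, is an ambient cut. The symmetric argument with the suffix-code property propagates leftward down to $|u|$. Reaching both extreme cuts yields $u\in\{x,y\}^*$ and $v\in\{x,y\}^*$, as desired.

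To produce one aligned internal $w$-cut I would use the primitivity of $w$: since $m\ge 2$ and $w$ is neither $x^m$ nor $y^m$, there is an index $i$ with $s_i\ne s_{i+1}$, so $w$ contains an occurrence of a block $s_is_{i+1}\in\{xy,yx\}$ straddling the internal $w$-cut $|u|+|s_1\cdots s_i|$. The key step is then to prove that this occurrence is \emph{aligned} with the ambient factorization, meaning that it coincides with two consecutive factors $z_a z_{a+1}$ equal to $x,y$ (or $y,x$); in particular its middle cut is an ambient cut, which is the aligned internal $w$-cut needed above. I would establish alignment by contradiction: a misaligned occurrence of $xy$ (or $yx$) sits inside a product of consecutive ambient factors without matching their cuts, and the goal is to confine it to at most three factors, so that $xy$ (or $yx$) occurs internally in a word of $\{x,y\}^3$, contradicting Theorem~\ref{thm:cube}.

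The main obstacle is precisely this confinement to three factors: if $|x|$ and $|y|$ differ greatly, many short factors could in principle fit inside one block, so a priori the window need not be short. Here I expect to need the full force of $\{x,y\}$ being a \emph{primitive} pair, exactly through the mechanism of the proof of Theorem~\ref{thm:cube}. A misaligned occurrence forces a proper overlap between a prefix and a suffix of $x$ (or of $y$), i.e.\ an equation $tu=uv$ with $t\ne v$; Lemma~\ref{lem:equation} then yields words $p,q$ with $pq$ primitive and $x,y\in\{p,q\}^*$, contradicting the assumption that $\{x,y\}$ is a primitive pair. I would use this overlap--periodicity mechanism twice: first to exclude an element of $\{x,y\}$ being nested with repetition inside $xy$, which is what would make the window long, thereby reducing to a window of at most three factors, and then to convert the remaining misalignment into such an overlap equation. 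Once the chosen occurrence of $xy$ (or $yx$) is shown to be aligned, the propagation of the first paragraph delivers $u,v\in\{x,y\}^*$.
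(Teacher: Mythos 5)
Your skeleton coincides with the paper's own proof: locate an occurrence of the block $xy$ (or $yx$) in the unique factorization of $w$ (possible since $m\ge 2$ and $w$ is primitive), force this occurrence to be aligned with the ambient factorization via Theorem~\ref{thm:cube}, and propagate alignment outward using the bifix property from Proposition~\ref{prop:bifix}; your first paragraph is in fact a welcome explicit rendering of the synchronization that the paper compresses into its final ``Hence''. The divergence is at the alignment step, and there your attempt has a genuine gap. You correctly observe that the paper's one-clause justification (``otherwise the block $xy$ would appear as internal factor in some word of $\{x,y\}^3$'') is not a pure length argument: when $|x|<|y|$, a misaligned occurrence of $xy$ can a priori span a window $y_a\,x^t\,y_b$ with $t\ge 2$, hence more than three ambient factors (your length count correctly forces the interior of a long window to be a run of the shorter word). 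But your proposed repair is announced rather than carried out, and at the decisive point it overclaims. From this configuration, the self-overlap you can extract (say, of the interior copy of $x$ straddling the block's middle cut against the block's prefix $x$) together with Lemma~\ref{lem:equation} yields only $x\in(pq)^*p\subseteq\{p,q\}^*$ --- membership of \emph{one} word of the pair. To contradict primitivity of $\{x,y\}$ you must also show $y\in\{p,q\}^*$, and all the overlap tells you about $y$ is that it begins with a power of $qp$, then carries the run $x^{t-1}$, and ends the block inside a proper fragment of $y_b$; converting that into $y\in\{p,q\}^*$ requires exactly the kind of boundary case analysis performed inside the proof of Theorem~\ref{thm:cube}, where forcing the \emph{second} word into $\{p,q\}^*$ is precisely the delicate part, and your sketch does not perform it. The sub-case where the run $x^t$ lies entirely in the $y$-half of the block (no straddle of the middle cut) is likewise unaddressed: there the overlap is between the block's $y$ and the ambient factor $y_a$ only if $y_a=y$, and a different argument is needed when $y_a=x$.

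Two smaller points. Your second invocation of the overlap mechanism is redundant: once the occurrence is confined to a window of at most three ambient factors, Theorem~\ref{thm:cube} applies as a black box (a two-factor window can be padded with one more ambient factor while keeping the occurrence internal), so there is no need to re-derive an equation $tu=uv$ at that stage. Also, it would tighten the window analysis to state explicitly that, by your own propagation argument, alignment of the occurrence at its left end, its right end, or its middle cut already implies full alignment, so ``misaligned'' may be assumed to mean that all three of these positions fall strictly inside ambient factors. In summary: you follow the same route as the paper, with a sharper eye for the step the paper leaves implicit, but the central confinement claim remains a plan in your write-up rather than a proof.
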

\begin{proof}
Let $w=x_1 \cdots x_m$, with $x_i \in \{x, y\}$, be the unique factorization of $w$ in $\{x, y\}$. Since $w$ is primitive, such a factorization necessarily contains $xy$ (or equivalently $yx$), i.e., there exists $1 \leq j < m$ such that $x_j=x$ and $x_{j+1}=y$. If $uwv \in \{x,y\}^*$ with $uwv=y_1\cdots y_n$, $y_i \in \{x, y\}$, then there exist $h,k$, with $1 \leq h < k \leq n$, such that $y_h=x_1, \dots, y_k=x_m$, since otherwise the block $xy$ would appear as internal factor in some word of $\{x, y\}^3$, contradicting Theorem \ref{thm:cube}. Hence, $u=y_1 \cdots y_{h-1}$ and $v=y_{k+1} \cdots x_n$.  
\end{proof}

\begin{proposition}
 If $\{x, y\}$ is a primitive pair, then $\{x, y\}$ is a circular code.
\end{proposition}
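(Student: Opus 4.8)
The plan is to verify directly the defining property of a circular code: given $u,v\in A^*$ with $uv,vu\in\{x,y\}^*$, I must show $u,v\in\{x,y\}^*$. If $u=\epsilon$ or $v=\epsilon$ this is immediate, so assume both are nonempty. Throughout I will exploit that, since $\{x,y\}^*$ is $2$-maximal, its basis $\{x,y\}$ is a bifix code (Proposition~\ref{prop:bifix}) and $\{x,y\}^*$ is pure (Proposition~\ref{prop:pure}). In particular $\{x,y\}$ is a code, so $\{x,y\}^*\setminus\{\epsilon\}$ is the disjoint union of $\{x,y\}$ and $\{x,y\}^2\{x,y\}^*$; and, being a prefix code, $\{x,y\}^*$ is right-unitary, i.e.\ $a,ab\in\{x,y\}^*$ implies $b\in\{x,y\}^*$.

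The engine of the proof is Corollary~\ref{cor:factor}. Let $\rho$ be the primitive root of the word $vu$, so $vu=\rho^k$ with $\rho$ primitive and $k\geq 1$; by purity $\rho\in\{x,y\}^*$. Suppose first that $\rho\in\{x,y\}^2\{x,y\}^*$, i.e.\ $\rho$ is a product of at least two generators. Then $\rho$ is a primitive word of the form required by Corollary~\ref{cor:factor}, and from
\[
(uv)^2=u\,(vu)\,v=u\,\rho\,(\rho^{k-1}v)\in\{x,y\}^*
\]
the corollary yields $u\in\{x,y\}^*$ and $\rho^{k-1}v\in\{x,y\}^*$; right-unitarity (applied with $a=\rho^{k-1}\in\{x,y\}^*$) then gives $v\in\{x,y\}^*$, as wanted. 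By the symmetric argument applied to the primitive root $\sigma$ of $uv$, using $(vu)^2=v\,(uv)\,u=v\,\sigma\,(\sigma^{l-1}u)$, we are also done whenever $\sigma\in\{x,y\}^2\{x,y\}^*$.

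It remains to treat the degenerate case in which both $\rho$ and $\sigma$ are single generators, which I expect to be the only real obstacle. Here $vu=\rho^k$ and $uv=\sigma^l$ with $\rho,\sigma\in\{x,y\}$. Since $uv$ and $vu$ are conjugate, their primitive roots $\sigma$ and $\rho$ are conjugate and of equal length, whence $k=l$. If $\rho=\sigma$, then $uv=vu$, so $u$ and $v$ commute and are powers of a common primitive word, which must be $\rho$ (the primitive root of $uv$); hence $u,v\in\rho^*\subseteq\{x,y\}^*$. If instead $\rho\neq\sigma$, then $\{x,y\}=\{\rho,\sigma\}$ consists of two distinct conjugate primitive words, and I will rule this out by showing that a primitive pair can never be a pair of conjugates.

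For this last point, write $\rho=\alpha\beta$ and $\sigma=\beta\alpha$; the words $\alpha,\beta$ are nonempty (else $\rho=\sigma$) and do not commute (else $\rho=\alpha\beta=\beta\alpha=\sigma$). Thus both $\rho$ and $\sigma$ lie in $\{\alpha,\beta\}^*$, so $\{x,y\}^*=\{\rho,\sigma\}^*\subseteq\{\alpha,\beta\}^*\in\mathcal{M}_2$, and this inclusion is proper because $\alpha\notin\{\rho,\sigma\}^*$ for length reasons (every nonempty word of $\{\rho,\sigma\}^*$ has length at least $|\rho|=|\alpha|+|\beta|>|\alpha|$). This contradicts the $2$-maximality of $\{x,y\}^*$, eliminating the case $\rho\neq\sigma$. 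Having exhausted all cases, we conclude $u,v\in\{x,y\}^*$, so $\{x,y\}$ is a circular code.
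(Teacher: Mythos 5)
Your proof is correct. It shares the paper's two key ingredients --- Corollary~\ref{cor:factor} and the closing contradiction showing that a primitive pair cannot consist of two distinct conjugate primitive words (your $\rho=\alpha\beta$, $\sigma=\beta\alpha$ argument is exactly the paper's ``$x=pq$, $y=qp$, so $\{x,y\}^*\subseteq\{p,q\}^*$'' step) --- but the way you feed the corollary is genuinely different and, in fact, tidier. The paper pins down an explicit occurrence of $xy$ in the factorization of $uv$, writes $uv=zxyt$, and then splits into sub-cases according to where the boundary between $u$ and $v$ falls (at a factorization point, or inside a generator), which forces a second application of Corollary~\ref{cor:factor}, to the word $ytz$ inside $vu$ --- an application whose hypotheses (that $ytz$ be primitive, or at least that its $\{x,y\}$-factorization contain $xy$) the paper leaves unchecked. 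Your squaring trick sidesteps all the boundary chasing: purity (Proposition~\ref{prop:pure}) puts the primitive root $\rho$ of $vu$ in $\{x,y\}^*$, the identity $(uv)^2=u\cdot\rho\cdot(\rho^{k-1}v)$ lets a single application of Corollary~\ref{cor:factor} dispose of the whole case $\rho\in\{x,y\}^2\{x,y\}^*$, and right-unitarity peels off $\rho^{k-1}$. (Right-unitarity of $\{x,y\}^*$ is the one ingredient the paper never states; it is the standard one-line consequence of $\{x,y\}$ being a prefix code, Proposition~\ref{prop:bifix}, so do record that line.) Your remaining case, where both primitive roots are single generators, corresponds precisely to the paper's case $uv=x^n$, and your handling of the sub-case $\rho=\sigma$ is actually more accurate than the paper's: there $uv=vu$ gives $u,v\in\rho^*\subseteq\{x,y\}^*$ directly, whereas the paper declares this situation ``against the primitiveness of $x$'' when in fact the desired conclusion simply holds. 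In short: same key lemma, different decomposition; your route buys a shorter, sub-case-free argument in which every application of the corollary has visibly satisfied hypotheses, at the modest cost of invoking purity and right-unitarity explicitly.
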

\begin{proof} Let $u,v \in A^*$ be words such that $uv, vu \in \{x,y\}^*$. We have to show that $u, v\in \{x,y\}^*$. If $uv=x^n$, we prove that $vu=y^n$. Indeed, $vu$ is a conjugate of a power of $x$, therefore it is a power of a conjugate of $x$. Let $vu=z^n$, with $z$ a conjugate of $x$ and hence primitive. Since $\vert uv \vert = \vert vu \vert$, we have $\vert z \vert = \vert x \vert$. Thus, either $z=x$ or $z=y^n$ (in particular $z=y$ because primitive). If $z=x$, then $uv=vu=x^n$, against the primitiveness of $x$. If $z=y$, we have $vu=y^n$, hence $x$ and $y$ are conjugate. So there exist $p,q$ such that $x=pq$ and $y=qp$ and $\{x,y\}^* \subseteq \{p,q\}^*$, a contradiction with the hypothesis that $\{x,y\}$ is a primitive pair. If $uv$ is not a power of $x$, then its unique factorization in $\{x,y\}$ contains $xy$. We can therefore write $uv=zxyt$, with $z,t \in A^*$. By Corollary \ref{cor:factor}, $z,t \in \{x,y\}^*$. If $u=z$, or $u=zx$, or $u=zxy$, we have done. Otherwise, $uv=zx_1x_2yt$, with $x=x_1x_2$, $u=zx_1$, $v=x_2yt$. In such a case, $vu=x_2ytzx_1$ and $ytz \in \{x,y\}^2\{x,y\}^*$. Thus, by Corollary \ref{cor:factor}, $x_1, x_2 \in \{x,y\}^*$, i.e., either $x_1=x$ and $x_2=\epsilon$ (resp. $x_2=x$ and  $x_1=\epsilon$) or $x_1,x_2 \in y^*$, which implies $x \in y^*$, a contradiction. This concludes the proof.
\end{proof}

\section{Bi-root of a Single Primitive Word}

In this section, we derive  some consequences on the combinatorics of a {\em single} word. In particular, we introduce the notion of bi-root of a primitive word, and we show how this notion may be useful to reveal some hidden repetitive structure in the word.

Let $w$ be a nonempty word. If $w$ is not primitive, then it can be written in a unique way as a concatenation of copies of a primitive word $r$, called the {\em root} of $w$. 
However, if $w$ is primitive, one can ask whether it can be written as a concatenation of copies of two words $x$ and $y$. If we further require that $\{x, y\}$ is a primitive set, then we call $\{x, y\}$ a {\em bi-root} of the word $w$. Note that the bi-root of a single word is not, in general, unique. For instance, for $w=abcbac$ we have $w=ab \cdot cbac= abcb \cdot ac$ and $\{ab, cbac\}$ and $\{abcb, ac\}$ are both primitive pairs, i.e., they are both bi-roots of $w$. However, if we additionally require that the size $\vert x \vert + \vert y \vert$ of the bi-root $\{x,y\}$ is ``short'' with respect to the length of $w$, then we obtain again the uniqueness. This is shown  in the next theorem.

\begin{theorem}\label{thm:squareroot}
	Let $w$ be a primitive word. Then $w$ has at most one bi-root $\{x,y\}$ such that $|x|+|y|<\sqrt[]{|w|}$.
\end{theorem}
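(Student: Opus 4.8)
The plan is to assume for contradiction that $w$ has two \emph{distinct} bi-roots, $\{x,y\}$ and $\{u,v\}$, both satisfying the size bound $|x|+|y|<\sqrt{|w|}$ and $|u|+|v|<\sqrt{|w|}$, and to derive a length inequality that is impossible. The starting observation is that since $\{x,y\}$ and $\{u,v\}$ are primitive pairs by the definition of bi-root, the submonoids $\{x,y\}^*$ and $\{u,v\}^*$ are precisely the $2$-maximal submonoids of which these pairs are the bases. This is exactly the setting in which Theorem~\ref{th_case_2} and Proposition~\ref{zbound} apply, so the whole argument reduces to feeding the hypotheses into those two results.

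First I would dispose of the easy case: if $\{x,y\}^*=\{u,v\}^*$, then both $\{x,y\}$ and $\{u,v\}$ are minimal generating sets of the same submonoid, and since such a set is unique (it is the set of generators described in~\eqref{base}), we get $\{x,y\}=\{u,v\}$, contradicting distinctness. Hence I may assume $\{x,y\}^*$ and $\{u,v\}^*$ are two \emph{distinct} $2$-maximal submonoids. Because $w$ is a common element of both (being expressible as a concatenation of copies of each pair) and $w\neq\epsilon$, the intersection $\{x,y\}^*\cap\{u,v\}^*$ is nonempty and different from $\{\epsilon\}$.

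Now Theorem~\ref{th_case_2} applies and tells me that $\{x,y\}^*\cap\{u,v\}^*=z^*$ for some primitive word $z$. Since $w$ lies in this intersection, $w\in z^*$, and as $w$ is itself primitive this forces $w=z$. The final step invokes Proposition~\ref{zbound}, which gives the length bound $|z|<(|x|+|y|)(|u|+|v|)$. Combining everything,
\[
|w|=|z|<(|x|+|y|)(|u|+|v|)<\sqrt{|w|}\cdot\sqrt{|w|}=|w|,
\]
a contradiction. Therefore no two distinct bi-roots can both have size strictly below $\sqrt{|w|}$, which is the claim.

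The proof is essentially a clean composition of the two earlier results, so I do not anticipate a genuine obstacle; the only conceptual point is recognizing that the threshold $\sqrt{|w|}$ is calibrated exactly so that the product of the two root-sizes falls below $|w|=|z|$, which is what turns Proposition~\ref{zbound} into a contradiction. If anything needs care, it is merely confirming that a bi-root's defining pair is literally the basis of a $2$-maximal submonoid (so that Theorem~\ref{th_case_2} is applicable) and that the step $w\in z^*\Rightarrow w=z$ uses the primitivity of $w$ rather than of $z$.
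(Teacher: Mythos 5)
Your proof is correct and follows essentially the same route as the paper's: assume a second bi-root below the size threshold, apply Theorem~\ref{th_case_2} to conclude $w\in z^*$ with $z$ primitive (so $w=z$ by the primitivity of $w$), and then contradict Proposition~\ref{zbound} via $|w|<(|x|+|y|)(|u|+|v|)<\sqrt{|w|}\cdot\sqrt{|w|}=|w|$. Your only addition is the explicit check that distinct bi-roots yield \emph{distinct} $2$-maximal submonoids (by uniqueness of the minimal generating set), a hypothesis of Theorem~\ref{th_case_2} that the paper's proof leaves implicit.
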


\begin{proof}
	Suppose by contradiction there exists another bi-root $\{u,v\}$ of $w$ with $|u|+|v|<\sqrt[]{|w|}$. Take $X=\{x,y\}$ and $U=\{u,v\}$. By Theorem \ref{th_case_2}, there exists a primitive word $z$ and an integer $n$ such that $w=z^n$. As $w$ is primitive, $w=z$ and $n=1$. By Proposition~\ref{zbound},  we have that $|w|<(|x|+|y|)(|u|+|v|)<\sqrt[]{|w|} \cdot \sqrt[]{|w|} = |w|$, a contradiction.
	 
\end{proof}

The following example shows a word $w$ that has bi-roots of different sizes, but only one of size less than $\sqrt{|w|}$.

\begin{example}
	Consider the primitive word $w=abcaabcabc$ of length $10$. The pair $\{a,bc\}$ is the only bi-root of $w$ of size smaller than $\sqrt{|w|}$. 
\end{example}

Asking for a tight bound in the statement of Theorem~\ref{thm:squareroot} is of course a problem intimately related to Conjecture~\ref{conj}.
\begin{conjecture}
Let $w$ be a primitive word. Then $w$ has at most one bi-root $\{x,y\}$ such that $|x|+|y|< \vert w \vert /2$.
\end{conjecture}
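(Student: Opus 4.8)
The plan is to mirror the proof of Theorem~\ref{thm:squareroot}, replacing the bound of Proposition~\ref{zbound} with the sharper bound asserted in Conjecture~\ref{conj}. Suppose toward a contradiction that a primitive word $w$ admits two distinct bi-roots $\{x,y\}$ and $\{u,v\}$, each of size (sum of lengths) strictly less than $|w|/2$. Since $\{x,y\}$ and $\{u,v\}$ are primitive pairs, the submonoids $X^*=\{x,y\}^*$ and $U^*=\{u,v\}^*$ are both $2$-maximal. Because $w$ is a nonempty word lying in both $X^*$ and $U^*$, the intersection $X^*\cap U^*$ is nonempty, so by Theorem~\ref{th_case_2} there is a primitive word $z$ with $X^*\cap U^*=z^*$. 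Then $w\in z^*$ forces $w=z^m$ for some $m\geq 1$, and the primitivity of $w$ gives $w=z$ and $m=1$; in particular $X^*\cap U^*=w^*$.

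The core step is then to invoke Conjecture~\ref{conj} itself, applied to the two $2$-maximal submonoids $X^*$ and $U^*$ whose intersection is $w^*$ with $w$ primitive. The conjecture would yield
\[
|w| < |x|+|y|+|u|+|v|.
\]
Combining this with the two size hypotheses $|x|+|y|<|w|/2$ and $|u|+|v|<|w|/2$ gives
\[
|w| < |x|+|y|+|u|+|v| < \frac{|w|}{2}+\frac{|w|}{2} = |w|,
\]
a contradiction. Hence the two bi-roots cannot both have size less than $|w|/2$, so $w$ has at most one such bi-root.

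The genuine obstacle here is that this argument is not unconditional: it rests entirely on Conjecture~\ref{conj}, which is stated but not proved in the paper. The statement of this second conjecture is therefore \emph{equivalent in spirit} to the first, and a rigorous proof would require first establishing the linear bound $|z|<|x|+|y|+|u|+|v|$ on the length of the generator of the intersection of two $2$-maximal submonoids. The hard part, then, is to replace the quadratic bound of Proposition~\ref{zbound} (obtained via the product automaton $\mathcal{A}$ recognizing $X^*\cap U^*$, which has at most $(|x|+|y|)(|u|+|v|)$ states) by a linear one. To do this I would try to exploit the rigid structure of the single cycle labeled by $z$ in $\mathcal{A}$: by Theorem~\ref{th_case_2}, $\mathcal{A}$ reduces to exactly one cycle, so only the states actually visited along the reading of $z$ matter, and the real task is to bound how many distinct state-pairs $(p,q)\in Q_X\times Q_U$ can be traversed before a repetition closes the cycle. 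A combinatorial analysis showing that the $X$-factorization and the $U$-factorization of $z$ must synchronize after a number of symbols linear in $|x|+|y|+|u|+|v|$ — plausibly via a pumping or periodicity argument on the overlaps of the two factorizations, in the spirit of Lemma~\ref{lem:equation} — would deliver the conjectured bound and thereby prove both conjectures. Absent such a strengthening, the cleanest honest formulation of the present theorem is conditional: \emph{Conjecture~\ref{conj} implies that every primitive word $w$ has at most one bi-root of size less than $|w|/2$.}
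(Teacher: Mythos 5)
The statement you were asked to prove is not proved in the paper at all: it appears there as an open conjecture, which the paper explicitly ties to Conjecture~\ref{conj} (``Asking for a tight bound in the statement of Theorem~\ref{thm:squareroot} is of course a problem intimately related to Conjecture~\ref{conj}''). So there is no proof of the paper's to compare yours against, and any submission claiming to prove it unconditionally would have to contain genuinely new mathematics. Your conditional reduction, for what it is worth, is correct: if a primitive $w$ had two distinct bi-roots $\{x,y\}$ and $\{u,v\}$, these are distinct primitive pairs, hence (since the minimal generating set of a submonoid is unique) they generate distinct $2$-maximal submonoids whose intersection contains $w$; Theorem~\ref{th_case_2} gives $\{x,y\}^*\cap\{u,v\}^*=z^*$ with $z$ primitive, primitivity of $w$ forces $w=z$, and substituting the conjectured linear bound $|z|<|x|+|y|+|u|+|v|$ for the quadratic bound of Proposition~\ref{zbound} yields $|w|<|w|/2+|w|/2=|w|$, a contradiction. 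This mirrors the proof of Theorem~\ref{thm:squareroot} exactly, and you are commendably explicit that it is conditional.

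The genuine gap is therefore the one you yourself name: everything rests on Conjecture~\ref{conj}, which the paper states without proof, and your sketch for establishing it is a program, not an argument. In particular, the single-cycle structure of the product automaton cannot by itself deliver a linear bound: a simple cycle in the product of two automata with roughly $|x|+|y|$ and $|u|+|v|$ states can a priori have length up to the \emph{product} of the two sizes, which is precisely why Proposition~\ref{zbound} stops at a quadratic bound. Ruling out long cycles requires a combinatorial synchronization result about how the $X$- and $U$-factorizations of $z$ interleave, and no such result is proved in the paper or in your proposal; your appeal to Lemma~\ref{lem:equation} and pumping on state pairs is plausible motivation but does not constitute a proof step. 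One small caution on your phrasing: the two conjectures are not shown to be \emph{equivalent} --- your argument shows only that Conjecture~\ref{conj} implies the statement in question, not the converse. The honest status of your work is thus the conditional formulation you close with, which matches the paper's own position; the strongest unconditional result available is the $\sqrt{|w|}$ threshold of Theorem~\ref{thm:squareroot}.
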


We observe that both the classical notion of root and the notion of bi-root are related to some  repetitive structure inside the word. If $w$ is not primitive, the length of its root reveals its repetitive structure in the sense that, if such a length is much smaller than the length of $w$, then the word $w$ can be considered highly repetitive. If $w$ is primitive, the size of its bi-root (intended as the sum of the lengths of the two components of the pair) plays an analogous role. This could be illustrated by the following (negative) example. Consider a word $w$ over the alphabet $A$ such that all the letters of $w$ are distinct, so that $|w|=|A|$. This word is not repetitive at all, and it has $|w|-1$ different bi-roots $\{x,y\}$, all of size $|w|$, corresponding to the trivial factorizations $w=xy$. Thus, the absence of repetitions in a word is related to the large size of its bi-roots. On the contrary, the existence in a word $w$ of a ``short'' (with respect to $|w|$) bi-root corresponds to the existence of some hidden repetitive structure in the word. This approach is connected to some already-considered notions of hidden repetitions (cf.~\cite{GMN13,G13,GaMa19}), as we further discuss in the next section.

We think that the notion of bi-root can be further explored and may have applications, e.g., in the area of  string algorithms.

Notice that the minimal length of a bi-root is affected by the combinatorial properties of the word. For example, if $w$ is a square-free word, then $w$ cannot have a bi-root $\{x,y\}$ such that $|x|+|y|<|w|/4$, since otherwise $w$ would contain a square ($xx$, $yy$, $xyxy$ or $yxyx$).
The previous remark suggests a possible link between the notion of a bi-root and the classical notion of \emph{binary pattern}, which has been deeply investigated in combinatorics on words and fully classified by J.~Cassaigne~\cite{Cas94} (see also~\cite[Chap.~3]{Lothaire2} for a survey). 

\section{Connections with Pseudo-Primitive Words}\label{sec:pseudo}

We now show how the notion of a primitive pair can be seen as a generalization of the notion of a pseudo-primitive word, with respect to an involutive (anti-)morphism $\theta$, as introduced in~\cite{CKS}.

A map $\theta: A^* \rightarrow A^*$ is a {\em morphism} (resp. {\em antimorphism}) if for each $u, v \in A^*$, $\theta(uv)=\theta(u)\theta(v)$ (resp. $\theta(uv)=\theta(v)\theta(u)$); $\theta$ is an {\em involution} if $\theta(\theta(a))=a$ for every $a \in A$. 

Let $\theta$ be an involutive morphism or antimorphism other than the identity function. We say that a word $w \in A^*$ is a {\em $\theta$-power} of $t$ if $w \in t\{t, \theta(t)\}^*$. A word $w$ is {\em $\theta$-primitive} if there exists no nonempty word $t$  such that $w$ is a $\theta$-power of $t$ and $\vert w \vert > \vert t \vert$. 

\begin{theorem}[\cite{CKS}]\label{kari}
	Given a word $w \in A^*$ and an involutive (anti-)morphism  $\theta$, there exists a unique $\theta$-primitive word $u \in A^*$ such hat $w$ is a $\theta$-power of $u$. The word $u$ is called the {\em $\theta$-root} of $w$.
\end{theorem}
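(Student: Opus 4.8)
The plan is to transport the statement into the language of two-element sets through the dictionary ``$w$ is a $\theta$-power of $t$'' if and only if ``$w\in t\{t,\theta(t)\}^*$''. The conceptual engine is that $\theta$ is a \emph{length-preserving} involutive (anti)morphism, so it stabilises every submonoid of the form $\{t,\theta(t)\}^*$ (it merely permutes the two generators) and therefore carries $k$-maximal submonoids to $k$-maximal submonoids and primitive sets to primitive sets. I would treat existence first, and then uniqueness, which is the real substance and is where Theorem~\ref{thm:pair} (equivalently Theorem~\ref{th_case_2}) enters.

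For existence, note that the set of words $t$ of which $w$ is a $\theta$-power is nonempty, since it contains $w$ itself. Pick $u$ of minimal length in it. The relation ``$w$ is a $\theta$-power of $t$'' is transitive: from $w\in u\{u,\theta(u)\}^*$ and $u\in t\{t,\theta(t)\}^*$, applying $\theta$ gives $\theta(u)\in\{t,\theta(t)\}^*$, so $\{u,\theta(u)\}^*\subseteq\{t,\theta(t)\}^*$ and hence $w\in\{t,\theta(t)\}^*$; since $w$ begins with $u$, which begins with $t$, and $\theta$ preserves lengths, the leading factor of $w$ in its $\{t,\theta(t)\}$-factorisation is forced to be $t$, i.e.\ $w\in t\{t,\theta(t)\}^*$. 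Minimality of $|u|$ then forbids any shorter such $t$, so $u$ is $\theta$-primitive: it is a $\theta$-root.

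For uniqueness, let $u,u'$ be $\theta$-primitive with $w$ a $\theta$-power of each. Both are prefixes of $w$, so with $|u|\le|u'|$ it suffices to prove $|u|=|u'|$. In the principal case --- $u\neq\theta(u)$, $u'\neq\theta(u')$, both $\{u,\theta(u)\}$ and $\{u',\theta(u')\}$ primitive pairs, and $\{w,\theta(w)\}$ of rank $2$ --- the inclusions $\{w,\theta(w)\}\subseteq\{u,\theta(u)\}^*$ and $\{w,\theta(w)\}\subseteq\{u',\theta(u')\}^*$ exhibit both pairs as primitive roots of $\{w,\theta(w)\}$. By Theorem~\ref{thm:pair} the primitive root of a rank-$2$ set is unique, so $\{u,\theta(u)\}=\{u',\theta(u')\}$; as all four words share a common length and $u,u'$ are prefixes of $w$, this gives $u=u'$.

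The main obstacle lies in the degenerate configurations that escape this clean correspondence: $u=\theta(u)$, so that $\{u,\theta(u)\}$ collapses to a singleton; $\{w,\theta(w)\}$ of rank $1$ (for instance when $w$ is a $\theta$-palindrome, which already happens for $w=u\theta(u)$ under an antimorphism even when $\{u,\theta(u)\}$ is a primitive pair); and $\{u,\theta(u)\}$ of rank $2$ but \emph{not} $2$-maximal (as for $u=ab$ with $\theta$ the reversal, where $\{ab,ba\}\subset\{a,b\}^*$). In each of these $u$ can be $\theta$-primitive while $\{u,\theta(u)\}$ fails to be a primitive pair, so Theorem~\ref{thm:pair} does not apply verbatim. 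I expect to dispose of them by descending to a primitive root and invoking classical uniqueness of the root of a single word: using purity of $2$-maximal monoids (Proposition~\ref{prop:pure}) together with the $\theta$-invariance of the submonoids involved, $w$ reduces to a power $z^k$ of one primitive, $\theta$-compatible word $z$, and an equality $z^k=(z')^{k'}$ of two such expressions forces $z=z'$ (with Lemma~\ref{lem:equation} controlling the overlaps when the two candidate roots are incommensurable). Making this case split exhaustive, and tracking the morphism-versus-antimorphism distinction throughout, is the delicate part.
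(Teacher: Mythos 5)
Your existence argument and your ``principal case'' of uniqueness are sound and coincide with the paper's own route: for an \emph{involutive morphism} $\theta$, the paper derives Theorem~\ref{kari} exactly as you do, by observing that $\{u,\theta(u)\}$ and $\{u',\theta(u')\}$ are two primitive roots of the rank-$2$ set $\{w,\theta(w)\}$ and invoking Theorem~\ref{thm:pair}; this dictionary is packaged there as Proposition~\ref{prop:tetaprimitive}, and the morphism-case degeneracies you list ($u=\theta(u)$, rank $1$) are indeed disposable via purity (Proposition~\ref{prop:pure}), much as you suggest. The genuine gap is your last paragraph, i.e.\ the antimorphism case --- which is precisely the case the paper itself does \emph{not} reprove. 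For an involutive antimorphism, ``$u$ is $\theta$-primitive'' does not imply ``$\{u,\theta(u)\}$ is a primitive pair'': the paper proves only the converse implication and exhibits the counterexample $w=abbaabbacbc$ under reversal, whose pair $\{w,\theta(w)\}$ has bi-root $\{abba,cbc\}$ (two $\theta$-palindromes). In that configuration your proposed repair breaks at every joint. Theorem~\ref{th_case_2} applies only to $2$-maximal submonoids, and $\{u,\theta(u)\}^*$ is exactly \emph{not} $2$-maximal here, so there is no reduction of $w$ to a power $z^k$. If instead you descend to the primitive roots of the two pairs, then even when Theorem~\ref{th_case_2} applies and yields $w\in z^*$, nothing ties $z$ back to $u$ and $u'$ --- if $w$ is primitive you merely get $w=z$ and no contradiction; and if the two primitive roots coincide, you are left needing uniqueness of the $\theta$-root \emph{inside} a monoid generated by two $\theta$-palindromes, which the paper's machinery never provides. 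Note also that without a length bound distinct primitive pairs can factor the same word (the paper's $w=abcbac$ has two bi-roots; Theorem~\ref{thm:squareroot} needs $|x|+|y|<\sqrt{|w|}$), and no such bound is available here, since $w=u\theta(u)$ is permitted.

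A second structural obstacle you flag but do not overcome: under an antimorphism, $w=u\theta(u)$ is always a $\theta$-palindrome, so $\{w,\theta(w)\}$ collapses to a singleton and the rank-$2$ apparatus has nothing to grip even when $\{u,\theta(u)\}$ is a genuine primitive pair. This cannot be patched by purity plus Lemma~\ref{lem:equation} along the lines you sketch. The paper is explicit about the asymmetry: it derives Theorem~\ref{kari} from Theorem~\ref{thm:pair} \emph{only when $\theta$ is an involutive morphism}, and for antimorphisms it proves just the one-way implication and a palindromic-factorization property, leaving the antimorphism case of Theorem~\ref{kari} to the original proof in~\cite{CKS}, which rests on dedicated word-combinatorics (Fine--Wilf and Lyndon--Sch\"utzenberger analogues for $\theta$-powers). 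So your plan is complete and paper-faithful for morphisms, but for antimorphisms the ``delicate part'' you postpone is the entire content of the theorem, and it is not obtainable from this paper's toolkit.
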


\begin{example}\label{ex_thetaroot}
	Let $\theta: \{a,b,c\}^* \rightarrow \{a,b,c\}^*$ the involutive morphism defined by $\theta(a)=b$, $\theta(b)=a$ and $\theta(c)=c$. The $\theta$-root of the word $abcabcbac$ is $abc$.
\end{example}

If $\theta$ is an involutive morphism, we show that Theorem \ref{kari} can be obtained as a consequence of Theorem \ref{thm:pair}. If $\theta$ is an involutive antimorphism, we obtain a slightly different formulation, from which we derive a new property of $\theta$-primitive words.

Given an (anti-)morphism $\theta$ and a set $X \subseteq A^*$, $\theta(X)$ denotes the set $\{\theta(u) \mid u \in X\}$. We say that $X$ is {\em $\theta$-invariant} if $\theta(X) = X$.

We have the following propositions.

\begin{proposition}\label{prop_invariant2}
	Let $\theta$ be an involutive (anti-)morphism. If $\{x, y\}$ is $\theta$-invariant, then so is its root.
\end{proposition}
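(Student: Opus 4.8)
The plan is to identify the root of $\{x,y\}$ with its unique primitive root and to show that the involution $\theta$ permutes the primitive roots of a $\theta$-invariant set. Since $\{x,y\}$ has rank at most $2$, by Theorem~\ref{thm:pair} (and trivially in the rank~$1$ case) it admits a \emph{unique} primitive root $R$; this is what we mean by its root. The goal is therefore to prove that $\theta(R)=R$.

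First I would record two elementary facts about an involutive (anti-)morphism $\theta$. Being an involution, $\theta$ is a bijection of $A^*$ onto itself, so $|\theta(S)|=|S|$ for every finite $S\subseteq A^*$. Moreover, for any $S\subseteq A^*$ one has $\theta(S^*)=\theta(S)^*$: if $\theta$ is a morphism this is immediate, and if $\theta$ is an antimorphism then $\theta(s_1\cdots s_n)=\theta(s_n)\cdots\theta(s_1)$ still lies in $\theta(S)^*$, and conversely. In particular $\theta$ sends the submonoid generated by $k$ words to a submonoid generated by $k$ words, and (since it preserves $\epsilon$ and maps products to products up to order) it maps minimal generating sets to minimal generating sets; thus $\theta(R)$ is the basis of $\theta(R)^*$.

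The key step is then to show that $\theta$ preserves primitivity, i.e.\ that $\theta(R)$ is again a primitive set. Suppose $\theta(R)$ were not primitive. Then $\theta(R)^*$ would not be $|R|$-maximal, so there would exist a submonoid $S^*$ with $|S|\leq |R|$ and $\theta(R)^*\subsetneq S^*$. Applying $\theta$, and using that it is an inclusion-preserving bijection together with $\theta(S^*)=\theta(S)^*$, we would obtain $R^*\subsetneq \theta(S)^*$ with $|\theta(S)|=|S|\leq |R|$, contradicting the $|R|$-maximality of $R^*$. Hence $\theta(R)$ is $|R|$-maximal, and being a basis it is a primitive set.

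Finally, from $\theta(\{x,y\})=\{x,y\}$ and $\{x,y\}\subseteq R^*$ we get $\{x,y\}=\theta(\{x,y\})\subseteq\theta(R^*)=\theta(R)^*$, with $|\theta(R)|=|R|=r(\{x,y\})$. Thus $\theta(R)$ is a primitive root of $\{x,y\}$, and by the uniqueness of the primitive root of a set of rank at most $2$ we conclude $\theta(R)=R$, as required. The main obstacle is the primitivity-preservation step, which must be argued uniformly for morphisms and antimorphisms via the identity $\theta(S^*)=\theta(S)^*$; everything else reduces to the uniqueness furnished by Theorem~\ref{thm:pair}.
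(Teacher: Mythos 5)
Your proof is correct and follows essentially the same route as the paper's: apply $\theta$ to the root and invoke the uniqueness of the primitive root (Theorem~\ref{thm:pair}) to conclude $\theta(R)=R$. The only difference is that you explicitly verify the step the paper leaves implicit --- that $\theta$, being an involutive (anti-)morphism with $\theta(S^*)=\theta(S)^*$, preserves $|R|$-maximality and hence maps primitive sets to primitive sets --- which is a welcome filling-in of detail rather than a different approach.
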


\begin{proof}

If $\{u,v\}$ is the root of $\{x,y\}$, then  $\theta(\{u,v\})$ is the root of $\theta(\{x,y\})$. Since $\theta(\{x,y\}) = \{x,y\}$ and, by Theorem 9, the root is unique, it follows that $\theta(\{u,v\}) = \{u,v\}$.
\end{proof}

\begin{example} 
	Let $\theta$ be as in Example \ref{ex_thetaroot}. The pair $\{abcabcbac, abcbacabc\}$ is $\theta$-invariant. However, it is not a primitive pair. Its bi-root is the pair $\{abc, bac\}$, which is $\theta$-invariant since $\theta(abc)=bac$.
\end{example}

\begin{remark}
	Let $\theta$ be an involutive morphism. Then $\{x, y\}$ is $\theta$-invariant if and only if $y=\theta(x)$. If $\theta$ is an involutive antimorphism, then $\{x, y\}$ is $\theta$-invariant if and only if either $y=\theta(x)$ or $x=\theta(x)$ and  $y=\theta(y).$ In the last case, $x$ and $y$ are called {\em $\theta$-palindromes}.  
\end{remark}

\begin{example}\label{reverse}
	Let  $\theta: \{a, b, c\}^* \mapsto \{a, b, c\}^*$ be the involutive antimorphism  defined by $\theta(a)=a$, $\theta(b)=b$, $\theta(c)=c$.  The pair $\{abcbbcba, abcba\}$ is $\theta$-invariant. Its bi-root is $\{a, bcb\}$, which is $\theta$-invariant since composed by $\theta$-palindromes. With the same $\theta$, the pair $\{abbbbabba, abbabbbba\}$ is $\theta$-invariant and its bi-root is $\{abb, bba\}$, which is $\theta$-invariant since $\theta(abb)=bba$.
\end{example}

\begin{proposition} \label{prop:tetaprimitive}
	Let $w \in A^*$ and $\theta$ be an involutive morphism of $A^*$. Then, $w$ is $\theta$-primitive if and only if the pair $\{w, \theta(w) \}$ is a primitive pair.
\end{proposition}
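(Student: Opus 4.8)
The plan is to prove the two implications separately, exploiting throughout that $\theta$ is simultaneously a morphism and an involution, so that $\theta$ commutes with concatenation and satisfies $\theta\circ\theta=\mathrm{id}$. The first thing I would record is that $\{w,\theta(w)\}$ is automatically $\theta$-invariant, since $\theta(\{w,\theta(w)\})=\{\theta(w),w\}$. I would also isolate at the outset the degenerate case $w=\theta(w)$: when $\theta$ is a morphism, $w=\theta(w)$ forces $w$ to consist of $\theta$-fixed letters, so any prefix $t$ appearing in a $\theta$-power of $w$ is itself $\theta$-fixed and $w\in t^{+}$. Hence in this case $w$ is $\theta$-primitive if and only if $w$ is an ordinary primitive word, which is exactly the condition for the singleton $\{w\}$ to be a primitive set. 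This lets me assume $w\neq\theta(w)$, so that $\{w,\theta(w)\}$ is a genuine pair, for the remainder of the argument.

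For the implication ``$w$ not $\theta$-primitive $\Rightarrow$ $\{w,\theta(w)\}$ not a primitive pair'', I would start from $w\in t\{t,\theta(t)\}^{*}$ with $t$ nonempty and $|t|<|w|$. Applying $\theta$ and using that it is a morphism with $\theta\circ\theta=\mathrm{id}$, every block $t$ is sent to $\theta(t)$ and every block $\theta(t)$ back to $t$, so $\theta(w)\in\{t,\theta(t)\}^{*}$ as well. Thus $\{w,\theta(w)\}\subseteq\{t,\theta(t)\}^{*}$, a submonoid with at most two generators. Since every nonempty word of $\{w,\theta(w)\}^{*}$ has length at least $|w|>|t|$, the generator $t$ lies in the larger submonoid but not in the smaller one, so the inclusion is proper; therefore $\{w,\theta(w)\}^{*}$ is not $2$-maximal and $\{w,\theta(w)\}$ is not primitive.

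For the converse I would argue contrapositively, aiming to extract a strictly shorter $\theta$-root of $w$. Assuming $\{w,\theta(w)\}$ is not a primitive pair, I first rule out rank $1$: if $w,\theta(w)\in r^{*}$ with $r$ primitive, then $\theta(r)$ is a word of length $|r|$ contained in $r^{*}$, which forces $\theta(r)=r$ and hence $w=\theta(w)$, a contradiction. So $\{w,\theta(w)\}$ has rank $2$, and by Theorem~\ref{thm:pair} it has a unique primitive root $\{u,v\}$ with $\{w,\theta(w)\}^{*}\subseteq\{u,v\}^{*}$, the inclusion being proper because the pair is not primitive. The crucial step is that, $\{w,\theta(w)\}$ being $\theta$-invariant, Proposition~\ref{prop_invariant2} makes its root $\{u,v\}$ $\theta$-invariant too. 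For a morphism $\theta$ this leaves two possibilities: either $v=\theta(u)$, or $u$ and $v$ are both $\theta$-fixed. The latter would make every word of $\{u,v\}^{*}$, in particular $w$, $\theta$-fixed, contradicting $w\neq\theta(w)$; so $v=\theta(u)$ and $\{u,v\}=\{u,\theta(u)\}$. Then $w\in\{u,\theta(u)\}^{*}$ begins with some block $t\in\{u,\theta(u)\}$, and since $\{t,\theta(t)\}=\{u,\theta(u)\}$ we get $w\in t\{t,\theta(t)\}^{*}$; as $w$ must be a concatenation of at least two such blocks (otherwise $\{w,\theta(w)\}=\{u,v\}$ and the inclusion would not be proper), we have $|t|=|u|<|w|$, so $w$ is a $\theta$-power of a strictly shorter word and is not $\theta$-primitive.

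I expect the converse direction to be the main obstacle, and specifically the passage from ``not a primitive pair'' to an explicit shorter $\theta$-root. What makes it work is the combination of the uniqueness of the primitive root for rank-$2$ sets (Theorem~\ref{thm:pair}) with the inheritance of $\theta$-invariance by that root (Proposition~\ref{prop_invariant2}): without knowing that $\{u,v\}$ is itself $\theta$-invariant, there would be no reason for the smaller generators to respect $\theta$, and the $\theta$-power structure of $w$ could not be recovered. A secondary point demanding care is the bookkeeping of the degenerate $\theta$-fixed case, which I dispose of at the very start precisely so that $\{w,\theta(w)\}$ is a true pair throughout the rank-$2$ reasoning.
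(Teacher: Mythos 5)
Your proof is correct and follows essentially the same route as the paper: the forward direction applies $\theta$ to the factorization $w\in t\{t,\theta(t)\}^*$ to trap both $w$ and $\theta(w)$ in $\{t,\theta(t)\}^*$, and the converse combines the uniqueness of the rank-$2$ root (Theorem~\ref{thm:pair}) with the $\theta$-invariance of that root (Proposition~\ref{prop_invariant2}) to conclude $v=\theta(u)$. Your additional bookkeeping --- isolating the degenerate case $w=\theta(w)$, excluding rank $1$, checking properness of the inclusions and the length condition $|t|<|w|$, and ruling out the sub-case where both root elements are $\theta$-fixed (a possibility the paper's remark on $\theta$-invariant pairs under morphisms silently omits) --- only makes explicit details the paper leaves implicit.
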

\begin{proof}
	Let us suppose, by contradiction, that $\{w, \theta(w) \}$ is a primitive pair and $w$ is not $\theta$-primitive. Then there exists $t$ such that $w \in \{t,\theta(t)\}^*$. Hence, $\theta(w) \in \{t,\theta(t)\}^*$, so the pair $\{w, \theta(w) \}$ is not primitive.  
	Conversely, let us suppose that $w$ is $\theta$-primitive and $\{w, \theta(w)\}$ is not a primitive pair. Denote by $\{u,v\}$ its bi-root. Since $\{w, \theta(w)\}$ is $\theta$-invariant, then $\{u,v\}$ is $\theta$-invariant, i.e., $v=\theta(u)$. Hence, $w \in \{u, \theta(u)\}^*$, i.e., $w$ is not $\theta$-primitive.
	 
\end{proof}

From Theorem \ref{thm:pair} and Proposition \ref{prop:tetaprimitive} we derive Theorem \ref{kari} when $\theta$ is an involutive morphism.

Now, let us consider the case of antimorphisms. Reasoning analogously as we did in the proof of Proposition~\ref{prop:tetaprimitive}, we can prove the following result.

\begin{proposition}
	Let $w \in A^*$ and $\theta$ an involutive antimorphism of $A^*$. If the pair $\{w, \theta(w) \}$ is a primitive pair, then $w$ is $\theta$-primitive.
\end{proposition}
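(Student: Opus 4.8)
The plan is to argue by contradiction, transferring almost verbatim the first implication in the proof of Proposition~\ref{prop:tetaprimitive} and adjusting only for the fact that $\theta$ is now an antimorphism. So I would assume that $\{w,\theta(w)\}$ is a primitive pair while $w$ is \emph{not} $\theta$-primitive, and aim to contradict the $2$-maximality of $\{w,\theta(w)\}^*$. First I would unfold the failure of $\theta$-primitiveness: there is a nonempty word $t$ with $|t|<|w|$ such that $w$ is a $\theta$-power of $t$, i.e.\ $w\in t\{t,\theta(t)\}^*\subseteq\{t,\theta(t)\}^*$.

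Next I would write $w=z_1z_2\cdots z_k$ with each $z_i\in\{t,\theta(t)\}$ and apply $\theta$. Because $\theta$ is an antimorphism we get $\theta(w)=\theta(z_k)\cdots\theta(z_1)$, so the order of the factors is reversed; but involutivity gives $\theta(t)\in\{t,\theta(t)\}$ and $\theta(\theta(t))=t\in\{t,\theta(t)\}$, hence every factor $\theta(z_i)$ again lies in $\{t,\theta(t)\}$. Thus $\theta(w)\in\{t,\theta(t)\}^*$ as well, and combining with the previous step we obtain $\{w,\theta(w)\}^*\subseteq\{t,\theta(t)\}^*$. Since $\{t,\theta(t)\}^*\in\mathcal{M}_2$ and $\{w,\theta(w)\}^*$ is $2$-maximal, the definition of $2$-maximality forces the equality $\{w,\theta(w)\}^*=\{t,\theta(t)\}^*$. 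In particular $t\in\{t,\theta(t)\}^*=\{w,\theta(w)\}^*$, so $t$ is a nonempty product of copies of $w$ and $\theta(w)$. Recalling that an involutive (anti-)morphism restricts to a permutation of $A$ and therefore preserves length, we have $|\theta(w)|=|w|>|t|$, so no nonempty product of copies of $w$ and $\theta(w)$ can be as short as $t$; this contradicts $0<|t|<|w|$ and closes the argument.

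The only place where the antimorphism case genuinely departs from the morphism case is the reversal of the factor order when passing from $w$ to $\theta(w)$, so the main point to verify is that this reversal is harmless: membership in $\{t,\theta(t)\}^*$ is insensitive to the order of the factors precisely because $\{t,\theta(t)\}$ is closed under $\theta$. I expect this to be the only subtle step; everything else is a direct reuse of the morphism argument together with the standard fact that $\theta$ preserves lengths. Note also that, unlike Proposition~\ref{prop:tetaprimitive}, only this one implication holds for antimorphisms — the converse can fail, as the $\theta$-palindrome examples (cf.\ Example~\ref{reverse}) indicate — which is why the statement is phrased as a one-way implication.
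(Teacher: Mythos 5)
Your proposal is correct and is essentially the paper's own argument: the paper proves this proposition by saying ``reasoning analogously as in Proposition~\ref{prop:tetaprimitive}'', i.e., by contraposition from $w\in\{t,\theta(t)\}^*$ to $\theta(w)\in\{t,\theta(t)\}^*$, and your observation that the factor reversal caused by the antimorphism is harmless because $\{t,\theta(t)\}$ is $\theta$-closed is exactly the adjustment needed. Your explicit length argument ($|t|<|w|$ rules out $\{w,\theta(w)\}^*=\{t,\theta(t)\}^*$) merely spells out a step the paper leaves implicit, so there is no substantive difference.
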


The converse does not hold in general, as the following example shows.

\begin{example}
	Let $\theta$ be the antimorphic involution of Example \ref{reverse}. The word $w=abbaabbacbc$ is $\theta$-primitive, whereas the pair $\{w, \theta(w)\}= \{abbaabbacbc, cbcabbaabba\}$ is not a primitive pair, since its bi-root is the pair $\{abba, cbc\}$.
\end{example}

Finally, we can state the following proposition, which provides a factorization property of $\theta$-primitive words. 

\begin{proposition}
	Let $w \in A^*$ and $\theta$ an involutive antimorphism. If $w$ is $\theta$-primitive and $\{w, \theta(w)\}$ is not a primitive pair, then there exist two $\theta$-palindromes $p$ and $q$ such that $w \in \{p, q\}^*$. 
\end{proposition}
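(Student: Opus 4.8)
The plan is to build on the immediately preceding proposition together with the $\theta$-invariance machinery already established. Recall that we are assuming $w$ is $\theta$-primitive but $\{w,\theta(w)\}$ is \emph{not} a primitive pair. Since $\{w,\theta(w)\}$ is not primitive, it is not its own root; let $\{p,q\}$ denote its bi-root, which exists and is a primitive pair by the discussion on primitive roots (and is unique by Theorem~\ref{thm:pair}, since $\{w,\theta(w)\}$ has rank $2$ --- note $w$ and $\theta(w)$ cannot commute, else $w$ would fail to be $\theta$-primitive). By construction $w,\theta(w)\in\{p,q\}^*$.

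The key observation is that $\{w,\theta(w)\}$ is $\theta$-invariant: applying the antimorphism $\theta$ to the set sends $w$ to $\theta(w)$ and $\theta(w)$ to $\theta(\theta(w))=w$ by involutivity, so $\theta(\{w,\theta(w)\})=\{w,\theta(w)\}$. Then I would invoke Proposition~\ref{prop_invariant2}, which states that the root of a $\theta$-invariant pair is itself $\theta$-invariant. Hence the bi-root $\{p,q\}$ is $\theta$-invariant, meaning $\theta(\{p,q\})=\{p,q\}$.

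Now I would apply the Remark characterizing $\theta$-invariant pairs in the antimorphism case: a pair $\{p,q\}$ is $\theta$-invariant if and only if either $q=\theta(p)$, or else $p=\theta(p)$ and $q=\theta(q)$, i.e.\ $p$ and $q$ are both $\theta$-palindromes. The first alternative, $q=\theta(p)$, must be excluded: if it held, then $w\in\{p,q\}^*=\{p,\theta(p)\}^*$, which would say exactly that $w$ is a $\theta$-power of $p$ with $|w|>|p|$ (since $\{p,q\}\neq\{w,\theta(w)\}$ forces $w$ to be a proper concatenation), contradicting the $\theta$-primitivity of $w$. Therefore the second alternative holds, so $p$ and $q$ are $\theta$-palindromes, and $w\in\{p,q\}^*$ as required.

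The main obstacle I anticipate is making fully rigorous the step excluding $q=\theta(p)$ --- specifically, confirming that $w$ is a genuine proper $\theta$-power in that case, i.e.\ that $|w|>|p|$ and $w$ begins with $p$ so that the definition $w\in t\{t,\theta(t)\}^*$ is literally met. This requires a small argument that the factorization of $w$ over $\{p,q\}$ is nontrivial (which follows because $\{w,\theta(w)\}$ is not its own root, so $w$ is strictly longer than each generator), together with possibly relabeling $p,q$ so that $w$ starts with the generator equal to $t$. All remaining steps are direct appeals to Proposition~\ref{prop_invariant2} and the $\theta$-invariance Remark, so the argument is short once that exclusion is handled cleanly.
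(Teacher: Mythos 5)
Your proposal is correct and follows essentially the same route as the paper's proof: take the bi-root of $\{w,\theta(w)\}$, use Proposition~\ref{prop_invariant2} to get its $\theta$-invariance, and rule out the alternative $q=\theta(p)$ via the $\theta$-primitivity of $w$, leaving only the $\theta$-palindrome case. The exclusion step you flag as an obstacle is fine and you essentially resolve it yourself: if $q=\theta(p)$ then $w\in\{p,\theta(p)\}^*$ with $|w|>|p|$ (since $w\in\{p,q\}$ would force $\{w,\theta(w)\}=\{p,q\}$ to be primitive), so $w$ is a proper $\theta$-power of whichever of $p,\theta(p)$ it begins with --- a detail the paper passes over silently with ``$v\neq\theta(u)$ since $w$ is $\theta$-primitive.''
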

\begin{proof}
	Suppose that $\{w, \theta(w)\}$ is not a primitive pair and denote by $\{u, v\}$ its bi-root. Since $\{w, \theta(w)\}$ is $\theta$-invariant,  then so is $\{u, v\}$ by Proposition \ref{prop_invariant2}, and $v \neq \theta(u)$ since $w$ is $\theta$-primitive. Then, $u=\theta(u)$ and $v=\theta(v)$ are $\theta$-palindromes.
	 
\end{proof}

Finally, we point out that our Theorem \ref{thm:cube} can be viewed as a generalization of the following result of Kari, Masson and Seki \cite{Kari11}:

\begin{theorem}[Theorem 12 of \cite{Kari11}]
	Let $x$ be a nonempty $\theta$-primitive word. Then neither $x\theta(x)$ nor $\theta(x)x$ occurs internally in a word of $\{x,\theta(x)\}^3$. 
\end{theorem}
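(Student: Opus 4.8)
The plan is to read the statement as a special case of Theorem~\ref{thm:cube}. Setting $y=\theta(x)$, the two forbidden factors $x\theta(x)$ and $\theta(x)x$ become $xy$ and $yx$, while $\{x,\theta(x)\}^3=\{x,y\}^3$; hence the conclusion is exactly the content of Theorem~\ref{thm:cube}, \emph{provided} we know that $\{x,\theta(x)\}$ is a primitive pair. (If $\theta(x)=x$ the statement is trivial, since it reduces to the fact that the primitive word $x$ has no internal occurrence in $xx$.) So the whole argument reduces to deciding when $\theta$-primitivity of $x$ forces $\{x,\theta(x)\}$ to be primitive.

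The morphism case is immediate. If $\theta$ is an involutive morphism, Proposition~\ref{prop:tetaprimitive} says precisely that $x$ is $\theta$-primitive if and only if $\{x,\theta(x)\}$ is a primitive pair. Thus $\{x,y\}$ is a primitive pair and Theorem~\ref{thm:cube} applies verbatim, with no further work.

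For an involutive antimorphism I would argue by a dichotomy on whether $\{x,\theta(x)\}$ is already primitive. If it is, the reduction above finishes the proof. If it is not, let $\{p,q\}$ be its bi-root; by Proposition~\ref{prop_invariant2} this bi-root is $\theta$-invariant, so, by the characterisation of $\theta$-invariant pairs, either $q=\theta(p)$ or both $p$ and $q$ are $\theta$-palindromes. The first alternative would exhibit $x$ as a $\theta$-power of the strictly shorter word $p$ (or $\theta(p)$), contradicting $\theta$-primitivity; hence $p=\theta(p)$, $q=\theta(q)$ and $x\in\{p,q\}^*$ with block length at least $2$. I would then code $\{p,q\}^*$ onto $\{0,1\}^*$ through the code isomorphism $\phi$. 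Since $p$ and $q$ are $\theta$-palindromes and $\theta$ is an antimorphism, $\theta$ corresponds under $\phi$ to the reversal of the block sequence: $\phi(\theta(x))$ is the reversal $\widetilde{\alpha}$ of $\alpha:=\phi(x)$, and $\theta$-primitivity of $x$ becomes primitivity of $\alpha$ with respect to reversal. A short computation shows that $x\theta(x)$ is primitive: if $\alpha\widetilde{\alpha}$ were a nontrivial power $\gamma^m$, then $\gamma$ would be a palindrome and $m$ odd, whence $\gamma=\gamma'\widetilde{\gamma'}$ and $\alpha=\gamma'(\widetilde{\gamma'}\gamma')^{(m-1)/2}$ exhibits $\alpha$ as a reversal-power, a contradiction. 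Being primitive of block length at least $2$, $x\theta(x)$ lies in $\{p,q\}^2\{p,q\}^*$, so by Corollary~\ref{cor:factor} every occurrence of $x\theta(x)$ inside a word of $\{x,\theta(x)\}^3\subseteq\{p,q\}^*$ is block-aligned. This transports the whole problem, faithfully and without loss of length control, to the binary alphabet with $\theta$ replaced by reversal.

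The main obstacle is precisely this residual reversal case, which is \emph{not} a consequence of Theorem~\ref{thm:cube}: the coded pair $\{\alpha,\widetilde{\alpha}\}$ may itself fail to be primitive, and its palindromic bi-root may consist of single letters, so that the coding yields no further reduction. For this irreducible situation I would mimic the overlap analysis used in the proof of Theorem~\ref{thm:cube}: assuming an internal occurrence of $\alpha\widetilde{\alpha}$ in a word of $\{\alpha,\widetilde{\alpha}\}^3$, one extracts from the overlap an equation $tu=uv$ with $t\neq v$, applies Lemma~\ref{lem:equation} to write $t=(pq)^m$, $v=(qp)^m$, $u\in(pq)^*p$, and concludes that $\alpha$ is a reversal-power of $pq$, contradicting its reversal-primitivity. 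I expect this last step to be the delicate point, because one must track how reversal interacts with the factors $pq$ and $qp$; a cleaner alternative would be an induction on $|x|$ in which each coding step strictly decreases the length, the single-letter palindrome case being the base handled directly by this overlap argument.
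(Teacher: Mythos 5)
First, note that the paper does not actually prove this statement: it quotes it from \cite{Kari11} and merely remarks that Theorem~\ref{thm:cube} ``can be viewed as a generalization'' of it, so there is no in-paper proof your argument could lean on. Your analysis of when the statement \emph{does} follow from the paper's machinery is accurate and, in fact, more careful than the paper's phrasing: for an involutive morphism, Proposition~\ref{prop:tetaprimitive} gives the equivalence ``$x$ $\theta$-primitive $\Leftrightarrow$ $\{x,\theta(x)\}$ primitive pair'', and Theorem~\ref{thm:cube} finishes that case. For an antimorphism your reduction is also sound as far as it goes: the rank-$1$ case collapses to $\theta(x)=x$ (which you correctly dispose of), the bi-root is $\theta$-invariant by Proposition~\ref{prop_invariant2}, the alternative $q=\theta(p)$ contradicts $\theta$-primitivity, the coding by $\theta$-palindromes conjugates $\theta$ into block reversal, your computation that $x\theta(x)$ is primitive is correct (with purity of $\{p,q\}^*$, via Proposition~\ref{prop:pure}, silently needed to align an arbitrary word-root of $x\theta(x)$ with the blocks), and Corollary~\ref{cor:factor} makes every occurrence block-aligned. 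You also correctly identify that the paper itself exhibits the obstruction: the example $w=abbaabbacbc$ shows that for antimorphisms $\theta$-primitivity does not imply the pair $\{w,\theta(w)\}$ is primitive, so Theorem~\ref{thm:cube} genuinely does not cover this case.

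The problem is that the residual case --- $\theta$ acting as reversal on a binary alphabet with $\{\alpha,\widetilde{\alpha}\}$ \emph{not} a primitive pair --- is the entire content of the theorem, and your sketch does not close it. Every case in the proof of Theorem~\ref{thm:cube} terminates by deriving $x,y\in\{p,q\}^*$ and contradicting primitivity of the pair; here that contradiction is unavailable by construction (the pair is non-primitive, and its root may be the alphabet $\{0,1\}$ itself, so no further coding is possible). To contradict reversal-primitivity instead, you must upgrade the conclusion of the overlap analysis from ``$\alpha\in\{p,q\}^*$'' to ``$\alpha\in\gamma\{\gamma,\widetilde{\gamma}\}^*$ for some shorter $\gamma$'', and nothing in Lemma~\ref{lem:equation} or in the Case~1/Case~2 bookkeeping supplies the needed palindromic identities (such as $\widetilde{p}=p$ and $\widetilde{q}=q$): note that $(qp)^m$ is a \emph{conjugate}, not the reversal, of $(pq)^m$, so the relation $\widetilde{\alpha}$ = reversal of $\alpha$ must be propagated through the Lyndon--Sch\"utzenberger data by a genuinely new argument. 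That propagation is exactly the hard part of Theorem~12 of \cite{Kari11}, which is proved there through a dedicated sequence of lemmas on overlaps of $x$ and $\theta(x)$; your closing admission that ``this last step'' is ``the delicate point'' leaves precisely the crux unproven, and your proposed induction on $|x|$ has this unresolved configuration as its base case. In summary: the morphism half is complete and correct; the antimorphism half contains a genuine gap at the irreducible reversal case, which cannot be obtained by ``mimicking'' the proof of Theorem~\ref{thm:cube} in the way described.
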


\section{Conclusions}

We introduced the notion of $k$-maximal submonoid, together with its basis, which we call a primitive set. We showed that the notion of $k$-maximal submonoid allows one to give new results in a classical area of formal languages and theory of codes: the algebraic structure of the intersection of two monoids. In particular, we showed that the intersection of two $2$-maximal monoids is either empty or $1$-maximal, that is, generated by a single primitive word. This result is no longer true, in general, for larger values of $k$ --- the study of the intersection of two $3$-maximal monoids will be the object of a forthcoming paper~\cite{Cas-Hol}.

The notion of primitive set (and the corresponding notion of root) can be viewed as a natural generalization of the fundamental notion of primitive word in combinatorics on words. We exhibited some new structural results that make use of this notion. In particular, we showed that primitive sets can be used to express some kinds of hidden repetitive structures that have been considered in the area of string matching.

\bibliographystyle{elsarticle-num}

\end{document}